\documentclass[a4paper,10pt]{article}


\usepackage{times}
\usepackage{soul}
\usepackage{url}
\usepackage[hidelinks]{hyperref}
\usepackage[utf8]{inputenc}
\usepackage[small]{caption}
\usepackage{graphicx}
\usepackage{amsmath}
\usepackage{amsthm}
\usepackage{booktabs}
\usepackage{algorithm}
\urlstyle{same}
\usepackage[noend]{algpseudocode}
\usepackage{stmaryrd}
\usepackage{amssymb}

\newtheorem{theorem}{Theorem}
%
\usepackage{newfloat}
\usepackage{listings}
\lstset{%
	basicstyle={\footnotesize\ttfamily},
	numbers=left,numberstyle=\footnotesize,xleftmargin=2em,
	aboveskip=0pt,belowskip=0pt,%
	showstringspaces=false,tabsize=2,breaklines=true}
\floatstyle{ruled}
\newfloat{listing}{tb}{lst}{}
\floatname{listing}{Listing}
%
%


\usepackage{tikz}
\usepackage{ifthen}
\usetikzlibrary{shapes.geometric, arrows}
\tikzstyle{process} = [rectangle, minimum width=3em, minimum height=2em, text centered, draw=black]
\tikzstyle{arrow} = [thick,->,>=stealth]
\usetikzlibrary{shapes,backgrounds,arrows,calc,positioning,snakes}

\usepackage{xspace}

\tikzstyle{every initial by arrow}=[initial text=] 
\tikzstyle{every state}=[fill=none,draw=black,text=black,inner sep=1pt,minimum size=1mm] 
\tikzstyle{every picture}=[->,>=stealth',shorten >=1pt,auto,node distance=1.3cm, semithick]

\tikzstyle{every node} =
[draw = none, fill = white, thin]
\tikzstyle{every edge} +=
[black, thick]

\tikzstyle{noall} =
[draw = none, fill = none]
\tikzstyle{nodraw} =
[draw = none, fill = white]
\tikzstyle{nofill} =
[draw = black, fill = none]

\tikzstyle{cnode} =
[circle, draw = black]
\tikzstyle{snode} =
[regular polygon, regular polygon sides = 4, draw = black]
\tikzstyle{lnode} =
[diamond, draw = black]

\newcommand{\Ag}{{Ag}}
\newcommand{\wf}[1]{\overline{#1}}
\newcommand{\myi}{{(i)}\xspace}
\newcommand{\myii}{{(ii)}\xspace}

\newcommand{\naww}[1]{{\langle\!\langle #1 \rangle\!\rangle}}
\newcommand{\all}[1]{{[\![ #1 ]\!]}}
\newcommand{\set}[1]{\{#1\}}

\newtheorem{definition}{Definition}
\newtheorem{lemma}{Lemma}
\newtheorem{remark}{Remark}

\newboolean{noednotes}
\setboolean{noednotes}{false}
\newcommand{\ednote}[1]{\ifthenelse{\boolean{noednotes}}{}{
    \marginpar{\colorbox{yellow}{\parbox{\linewidth}{\begin{center}\vspace{-3mm}\footnotesize\textsf{#1}\vspace{-3mm}\end{center}}}}}}
\newcommand{\signedednote}[2]{\ifthenelse{\boolean{noednotes}}{}{
\begin{center} \begin{minipage}{3in}
#2 --    #1
\end{minipage} \end{center}}}

\newcommand{\sem}[1]{\llbracket {#1} \rrbracket}

%


\title{Towards the Combination of \\ Model Checking and Runtime Verification \\on Multi-Agent Systems}
\author{
	Angelo Ferrando\textsuperscript{\rm 1} and
	Vadim Malvone\textsuperscript{\rm 2}\\
	{\small\textsuperscript{\rm 1}University of Genova, Italy}\\
	{\small\textsuperscript{\rm 2}T\'el\'ecom Paris, France}\\
	{\small angelo.ferrando@unige.it, vadim.malvone@telecom-paris.fr}
}

\date{}

\begin{document}
	
\maketitle




\begin{abstract}
Multi-Agent Systems (MAS) are notoriously complex and hard to verify. In fact, it is not trivial to model a MAS, and even when a model is built, it is not always possible to verify, in a formal way, that it is actually behaving as we expect. Usually, it is relevant to know whether an agent is capable of fulfilling its own goals. One possible way to check this is through Model Checking. Specifically, by verifying Alternating-time Temporal Logic (ATL) properties, where the notion of strategies for achieving goals can be described. Unfortunately, the resulting model checking problem is not decidable in general. 
In this paper, we present a verification procedure based on combining Model Checking and Runtime Verification, where sub-models of the MAS model belonging to decidable fragments are verified by a model checker, and runtime monitors are used to verify the rest. 
We present our technique and we show experimental results.
\end{abstract}


\section{Introduction}

Intelligent systems, such as Multi-Agent Systems (MAS), can be seen as a set of intelligent entities capable of proactively decide how to act to fulfill their own goals.
These entities, called generally agents, are notoriously autonomous, \textit{i.e.}, they do not expect input from an user to act, and social, \textit{i.e.}, they usually communicate amongst each other to achieve common goals.

Software systems are not easy to trust in general. This is especially true in the case of complex and distributed systems, such as MAS. Because of this, we need verification techniques to verify that such systems behave as expected. More specifically, in the case of MAS, it is relevant to know whether the agents are capable of achieving their own goals, by themselves or by collaborating with other agents by forming a coalition. This is usually referred to as the process of finding a strategy for the agent(s).

A well-known formalism for reasoning about strategic behaviours in MAS is Al-ternating-time Temporal Logic ($ATL$)~\cite{AHK02}. 
Before verifying $ATL$ specifications, two questions need to be answered: (i) \emph{does each agent know everything about the system?} (ii) \emph{does the property require the agent to have memory of the system?}
The first question concerns the model of the MAS. If each agent can distinguish each state of the model, then we have \emph{perfect information}; otherwise, we have \emph{imperfect information}.
The second question concerns the $ATL$ property. If the property can be verified without the need for the agent to remember which states of the model have been visited before, then we have \emph{imperfect recall}; otherwise, we have \emph{perfect recall}. 

The model checking problem for $ATL$ giving a generic MAS is known to be undecidable. This is due to the fact that the model checking problem for $ATL$ specifications under imperfect information and perfect recall has been proved to be undecidable~\cite{DimaT11}.
Nonetheless, decidable fragments exist. Indeed, model checking $ATL$ under perfect information is PTIME-complete~\cite{AHK02}, while under imperfect information and imperfect recall is PSPACE~\cite{Sch04}. 
Unfortunately, MAS usually have imperfect information, and when memory is needed to achieve the goals, the resulting model checking problem becomes undecidable.
Given the relevance of the imperfect information setting, even partial solutions to the problem are useful.

This is not the first time that a verification technique alone is not enough to complete the wanted task. 
Specifically, even if the verification of the entire model is not possible, there might still be sub-models of the model for which it is. Consequently, we could focus on these sub-models for which the model checking problem is still decidable;
which are the sub-models with perfect information and perfect recall strategies. 
With more detail, given an $ATL$ formula $\varphi$ and a model of MAS $M$, our procedure extracts all the sub-models of $M$ with perfect information that satisfy a sub-formula of $\varphi$. After this step, runtime monitors are used to check if the remaining part of $\varphi$ can be satisfied at execution time. If this is the case, we can conclude at runtime the satisfaction of $\varphi$ for the corresponding system execution. This is determined by the fact that the system has been observed behaving as expected, since it has verified at design time the sub-formula $\psi$ of $\varphi$, and at runtime the remaining temporal part of $\varphi$ (which consists in the part left to verify in $\varphi$, not covered by $\psi$). Note that, this does not imply that the system satisfies $\varphi$, indeed future executions may violate $\varphi$. The formal result over $\varphi$ only concerns the current system execution, and how it has behaved in it. However, we will present preservation results on the initial model checking problem of $\varphi$ on the model of the system $M$, as well. This will be obtained by linking the result obtained at runtime, with its static counterpart. Hence, we are going to show how the satisfaction (resp., violation) of $\varphi$ at runtime in our approach can be propagated to the verification question over $\varphi$ on model $M$.
Before moving on with the related works in literature, it is important to linger on the main contribution of this work. As we mentioned previously, the problem of statically verify MAS with imperfect information and using perfect recall strategies is undecidable. Thus, the work presented in this paper cannot answer the same question (\textit{i.e.}, we are not claiming decidability for a well-known undecidable problem). Instead, it is focused on gathering and extracting more information about the MAS under analysis at runtime, through runtime verification. This information can be used to better understand the system, and it is an improvement w.r.t. the undecidability of the original problem.

The intuition behind this work lies behind the relation amongst what can be observed at execution time (runtime), and what can be concluded at design time (statically). To the best of our knowledge, no such relation has ever been explored before in the strategic scenario. Usually, static verification of MAS mainly consists in verifying whether strategies for the agents exist to achieve some common goal (expressed as some sort of temporal property enriched with strategic flavour). Even though the two formal verification techniques may seem completely orthogonal, they are very close to each other. In fact, standard runtime verification of temporal properties (such as LTL) consists, in a certain way, in applying model checking at runtime over the all possible executions of a system (whose model may not be available). For the verification of strategic properties as well such relation holds. However, because of the gap between the linearity of the properties verifiable by a runtime monitor, and the branching behaviour of strategic properties, the results that can be obtained through runtime verification are not so natural to propagate to the corresponding model checking problem. Which means, it is not obvious, given a result at runtime, to know what to conclude on the corresponding static verification problem. This is of paramount difference w.r.t. LTL, where a runtime violation can be propagated to a violation of the model checking problem as well. Nonetheless, as we are going to show in this paper, also for strategic properties it is possible to use runtime verification to propagate results on the initial model checking problem. In a nutshell, since it will be better clarified in the due course, static verification of strategic properties over a MAS consists in checking whether a strategy for a set of agents (coalition) can be used to achieve a common (temporal) goal. Now, this is done by analysing, through model checking, the possible executions inside the model in accordance with the strategies for the coalition. Even though at runtime such thorough analysis cannot be done, the observation of an execution of the system at runtime can bring much information. For instance, let us say that the current system execution satisfies the temporal property (the goal, without considering the strategic aspects). Then, this means that the agents at runtime were capable (at least once) to collaborate with each other to achieve a common goal (the temporal property). Note that, this does not imply that the agents will always behave (we are still not exhaustive at runtime), but gives us a vital information about the system: ``if the agents want to achieve the goal, they can''. This runtime outcome can be propagated back to the initial model checking problem, and helps us to conclude the satisfaction of the strategic property when all the agents are assumed to collaborate (one single big coalition). Naturally, it might be possible that even with smaller coalitions the goal would still be achievable, but this is something that cannot be implicated with the only runtime information. On the other hand, if at runtime we observe a wrong behaviour, it means the agents were not capable of achieving the goal. Since we cannot claim which (if any) coalitions were actually formed to achieve the goal, we cannot assume that it is not possible with a greater coalition to achieve the goal. 
In fact, two scenarios are possible. 1) The agents did not form any coalition (each agent works alone). 2) The agents did form a coalition, but this was not enough to achieve the goal. In both cases, there is a common result that can be propagated back to the initial model checking problem, which is that without cooperating the agents cannot achieve the goal. This is true in case (1), since it is what has actually happened at runtime, and it is also true in (2), since by knowing that cooperating (at a certain level) is not enough to achieve the goal, it is also true that with less cooperation the same goal cannot be achieved neither. Note that, this does not imply that the agents will always wrongly behave, indeed with a greater coalition of agents it might still be possible to conclude the goal achievement. The vital information obtained in this way at runtime can be rephrased as: ``if the agents do not cooperate, they cannot achieve the goal''.


\section{Related Work}
\paragraph{Model Checking on MAS.}{
Several approaches for the verification of specifications in $ATL$ and $ATL^*$ under imperfect information and perfect recall have been recently put forward. In one line, restrictions are made on how information is shared amongst the agents, so as to retain decidability~\cite{DBLP:conf/atal/BerthonMM17,DBLP:journals/tocl/BerthonMMRV21}. In a related line, interactions amongst agents are limited to public actions only~\cite{BelardinelliLMR17,DBLP:journals/ai/BelardinelliLMR20}. These approaches are markedly different from ours as they seek to identify classes for which verification is decidable.
Instead, we consider the whole class of iCGS and define a general verification procedure.
In this sense, existing approaches to approximate $ATL$ model checking under imperfect information and perfect recall have either focused on an approximation to perfect information~\cite{BelardinelliLM19,BelardinelliM20} or developed notions of bounded recall~\cite{BelardinelliLM18}. Related to bounded strategies, in \cite{DBLP:journals/ai/JamrogaMM19} the notion of natural strategies is introduced and in \cite{DBLP:conf/atal/JamrogaMM19} is provided a model checking solution for a variant of ATL under imperfect information.

Differently from these works, we introduce, for the first time, a technique that couples model checking and runtime verification to provide results. 
Furthermore, we always concludes with a result. Note that the problem is undecidable in general, thus the result might be inconclusive (but it is always returned). When the result is inconclusive for the whole formula, we present sub-results to give at least the maximum information about the satisfaction/violation of the formula under exam.
}

 
\paragraph{Runtime Verification.}{
Runtime Verification (RV) has never been used before in a strategic context, where monitors check whether a coalition of agents satisfies a strategic property. This can be obtained by combining Model Checking on MAS with RV. 
The combination of Model Checking with RV is not new; in a position paper dating back to 2014, Hinrichs et al. suggested to ``model check what you can, runtime verify the rest'' \cite{DBLP:conf/birthday/HinrichsSZ14}. Their work  presented several realistic examples where such mixed approach would give advantages, but no technical aspects were addressed. Desai et al.~\cite{DBLP:conf/rv/DesaiDS17}  present a framework to combine model checking and runtime verification for robotic applications. 
They represent the discrete model of their system and extract the assumptions deriving from such abstraction. 
Kejstov{\'{a}} et al.~\cite{DBLP:conf/rv/KejstovaRB17} extended an existing software model checker, DIVINE~\cite{barnat2013divine}, with a runtime verification mode. The system under test  consists of a user program in C or C++, along with the environment. Other blended approaches exist, such as a verification-centric software development process for Java making it possible to write, type check, and consistency check behavioural specifications for Java before writing any code~\cite{DBLP:conf/qsic/ZimmermanK09}. Although it integrates a static checker for Java and a runtime assertion checker, it does not properly integrate model checking and RV. 
In all the previously mentioned works, both Model Checking and RV were used to verify temporal properties, such as LTL. Instead, we focus on strategic properties, we show how combining Model Checking of $ATL^*$ properties with RV, and we can give results; even in scenarios where Model Checking alone would not suffice. Because of this, our work is closer in spirit to~\cite{DBLP:conf/birthday/HinrichsSZ14}; in fact, we use RV to support Model Checking in verifying at runtime what the model checker could not at static time.
Finally, in~\cite{10.5555/3463952.3464230}, a demonstration paper presenting the tool deriving by this work may be found. Specifically, in this paper we present the theoretical foundations behind the tool.
}



\noindent

\section{Preliminaries}\label{sec:preliminaries}

In this section we recall some preliminary notions. 
Given a set $U$, $\wf{U}$ denotes its complement. 
We denote the length of a tuple $v$ as $|v|$, and its $i$-th element as $v_i$.  
For $i \leq |v|$, let $v_{\geq i}$ be the suffix $v_{i},\ldots, v_{|v|}$ of $v$ starting at $v_i$ and $v_{\leq i}$ the prefix $v_{1},\ldots, v_{i}$ of $v$. We denote with $v \cdot w$ the concatenation of the tuples $v$ and $w$.

\subsection{Models for Multi-agent systems}
We start by giving a formal model for Multi-agent Systems by means of concurrent game structures with imperfect information \cite{AHK02,Jamroga-Hoek}.
\begin{definition}
	\label{def:cgs}
A \emph{concurrent game structure with imperfect information (iCGS)} is a tuple $M = \langle \Ag, AP, S, s_I, \{Act_i\}_{i \in \Ag}, \{\sim_i\}_{i \in \Ag}, d, \delta, V \rangle$ such that: 
\begin{itemize} 
	\item $\Ag = \set{1,\dots,m}$ is a nonempty finite set of agents (or players).
	\item $AP$ is a nonempty finite set of atomic propositions (atoms).
	\item $S\neq \emptyset$ is a finite set of {\em states}, with {\em initial state}  $s_I \in S$.
	\item For every $i \in \Ag$, $Act_i$ is a nonempty finite set of {\em actions}. Let $Act = \bigcup_{i \in \Ag} Act_i$ be the set of all actions, and $ACT = \prod_{i \in \Ag} Act_i$ the set of all joint actions. 
	\item For every $i \in \Ag$, $\sim_i$ is a relation of {\em indistinguishability} between states. That is, given states $s, s' \in S$, $s \sim_i s'$ iff $s$ and $s'$ are observationally indistinguishable for agent $i$.
	\item The {\em protocol function} $d: \Ag \times S \rightarrow (2^{Act}\setminus \emptyset)$ defines the availability of actions so that for every $i \in Ag$, $s \in S$, \myi $d(i,s) \subseteq Act_i$ and \myii $s \sim_i s'$ implies $d(i,s) = d(i,s')$.
	\item The {\em (deterministic) transition function} $\delta : S \times ACT \to S$ assigns a successor state $s' = \delta(s, \vec{a})$ to each	state $s\in S$, for every joint action $\vec{a} \in ACT$ such that $a_i \in d(i,s)$ for every $i \in \Ag$, that is, $\vec{a}$ is {\em enabled} at $s$.  
	\item $V: S \rightarrow 2^{AP}$ is the {\em labelling function}.
 \end{itemize}
\end{definition}

By Def.~\ref{def:cgs} an iCGS describes the interactions of a group $\Ag$ of agents, starting from the initial state $s_I \in S$, according to the transition function $\delta$. The latter is constrained by the availability of actions to agents, as specified by the protocol function $d$. Furthermore, we assume that every agent $i$ has imperfect information of the exact state of the system; so in any state $s$, $i$ considers epistemically possible all states $s'$ that are $i$-indistinguishable from $s$ \cite{FHMV95}.
When every $\sim_i$ is the identity relation, \textit{i.e.}, $s \sim_i s'$ iff $s = s'$, we obtain a standard CGS with perfect information \cite{AHK02}. 

Given a set $\Gamma \subseteq \Ag$ of agents and a joint action $\vec{a} \in ACT$, let $\vec{a}_{\Gamma}$ and $\vec{a}_{\wf{\Gamma}}$
be two tuples comprising only of actions for the agents in $\Gamma$ and $\wf{\Gamma}$, respectively. 

A history $h \in S^+$ is a finite (non-empty) sequence of states.  
The indistinguishability relations are extended to histories in a synchronous, point-wise way, \textit{i.e.}, histories $h, h' \in S^+$ are {\em
indistinguishable} for agent $i \in \Ag$, or $h \sim_i h'$, iff \myi $|h| = |h'|$ and \myii for all $j \leq |h|$, $h_j \sim_i h'_j$.

%
%
%

\subsection{Syntax}
To reason about the strategic abilities of agents in iCGS with imperfect information, we use Alternating-time Temporal Logic
$ATL^*$~\cite{AHK02}.
\begin{definition}
	\label{def:ATL*}
State ($\varphi$) and path ($\psi$) formulas in $ATL^*$ are defined as
follows, where $q \in AP$ and $\Gamma \subseteq \Ag$:
\begin{eqnarray*}
\varphi & ::= & q \mid \neg \varphi  \mid \varphi \land \varphi \mid \naww{\Gamma} \psi\\
\psi & ::= & \varphi \mid \neg \psi \mid \psi \land \psi \mid X \psi \mid (\psi U \psi)
\end{eqnarray*}
Formulas in $ATL^*$ are all and only the state formulas.
\end{definition}

As customary, a formula $\naww{\Gamma} \Phi$ is read as ``the agents in coalition $\Gamma$ have a strategy to achieve $\Phi$''. The meaning of linear-time operators {\em next} $X$ and {\em until} $U$ is standard \cite{BaierKatoen08}.  Operators $\all{\Gamma}$, {\em release} $R$, {\em finally} $F$, and {\em globally} $G$ can be introduced as usual.
Formulas in the $ATL$ fragment of $ATL^*$ are obtained from
Def.~\ref{def:ATL*} by restricting path formulas $\psi$ as follows,
where $\varphi$ is a state formula and $R$ is the {\em release}
operator:
\begin{eqnarray*}
\psi & ::= & X \varphi \mid (\varphi U \varphi) \mid (\varphi R \varphi)
\end{eqnarray*}

In the rest of the paper, we will also consider the syntax of ATL$^*$ in negative normal form (NNF):

\begin{eqnarray*}
	\varphi & ::= & q \mid \neg q  \mid \varphi \land \varphi \mid \varphi \vee \varphi \mid \naww{\Gamma} \psi \mid \all{\Gamma} \psi\\
	\psi & ::= & \varphi \mid \psi \land \psi \mid \psi \vee \psi \mid X \psi \mid (\psi U \psi) \mid (\psi R \psi)
\end{eqnarray*}

where $q \in AP$ and $\Gamma \subseteq \Ag$.


\subsection{Semantics}
When giving a semantics to $ATL^*$ formulas we assume that agents are endowed with {\em uniform strategies} \cite{Jamroga-Hoek}, \textit{i.e.}, they perform the same action whenever they have the same information. 

\begin{definition}
	\label{uniformitybouned}
	A \emph{uniform strategy} for agent $i \in \Ag$ is a function $\sigma_i: S^+ \to Act_i$ such that for all histories $h, h' \in S^+$, \myi
	$\sigma_i(h) \in d(i, last(h))$; and \myii
	$h \sim_i h'$ implies $\sigma_i(h) = \sigma_i(h')$.  
\end{definition}

By Def.~\ref{uniformitybouned} any strategy for agent $i$ has to return actions that are enabled for $i$. Also, whenever two histories are indistinguishable for $i$, then the same action is returned. Notice that, for the case of CGS (perfect information), condition \myii is satisfied by any strategy $\sigma$. Furthermore, we obtain memoryless (or imperfect recall) strategies by considering the domain of $\sigma_i$ in $S$, \textit{i.e.}, $\sigma_i: S \to Act_i$.

Given an iCGS $M$, a {\em path} $p \in S^{\omega}$ is an infinite sequence $s_1 s_2\dots$ of states.  
Given a joint strategy $\sigma_\Gamma = \set{\sigma_i \mid i \in \Gamma}$, comprising of one strategy for each agent in coalition $\Gamma$, a path $p$ is \emph{$\sigma_\Gamma$-compatible} iff for every $j \geq 1$, $p_{j+1} = \delta(p_j, \vec{a})$ for some joint action $\vec{a}$ such that for every $i \in \Gamma$, $a_i = \sigma_i(p_{\leq j})$, and for every $i \in \wf{\Gamma}$, $a_i \in d(i,p_j)$.
Let $out(s, \sigma_\Gamma)$ be the set of all $\sigma_\Gamma$-compatible paths from $s$.

We can now assign a meaning to $ATL^*$ formulas on iCGS. 
\begin{definition}
	\label{satisfaction}
The satisfaction relation $\models$ for an iCGS $M$, state $s \in S$, path $p \in S^{\omega}$, atom $q \in AP$, and $ATL^*$ formula $\phi$ is defined as follows: 

\begin{tabbing}
$(M, s) \models q$ \ \ \ \ \ \ \ \ \ \ \=  iff \ \ \ \ \= $q \in {V}(s)$\\
$(M, s) \models \neg \varphi$ \> iff \>  $(M, s) \not \models \varphi$\\
$(M, s) \models \varphi \land \varphi'$ \>iff\>  $(M, s) \models \varphi$  and  $(M, s) \models \varphi'$\\
$(M, s) \models \naww{{\Gamma}} \psi$  \>iff\> for some $\sigma_\Gamma$, for all $p \!\in\! out(s, \sigma_\Gamma)$,  $(M, p) \!\models\! \psi$\\
$(M, p) \models \varphi $ \>iff\> $(M, p_1) \models \varphi$ \\
$(M, p) \models \neg \psi$ \> iff \>  $(M, p) \not \models \psi$\\
$(M, p) \models \psi \land \psi'$ \>iff\>  $(M, p) \models \psi$  and  $(M, p) \models \psi'$\\
$(M, p) \models X \psi$ \>iff\> $(M, p_{\geq 2}) \models \psi$\\
$(M, p) \models \psi U \psi'$  \>iff\> for some $k \geq 1$, $(M, p_{\geq k}) \models \psi'$, and \\ \> \> for all $j$, $1 \leq j < k \Rightarrow (M, p_{\geq j}) \models \psi$
\end{tabbing}
\end{definition}

We say that formula $\phi$ is {\em true} in an iCGS $M$, or $M \models \phi$, iff $(M, s_I) \models \phi$.

We now state the model checking problem.
\begin{definition}
	Given an iCGS $M$ and a formula $\phi$, the model checking problem concerns determining whether $M \models \phi$.
\end{definition}

Since the semantics provided in Def.~\ref{satisfaction} is the standard interpretation of $ATL^*$ \cite{AHK02,Jamroga-Hoek}, it
is well known that model checking $ATL$, {\em a fortiori} $ATL^*$, against iCGS with imperfect information and  perfect recall is undecidable~\cite{DimaT11}.  In the rest of the paper we develop methods to obtain partial solutions to this by using Runtime Verification (RV).

\subsection{Runtime verification and Monitors}

Given a nonempty set of atomic propositions $AP$, we define a \emph{trace} $\rho=ev_1 ev_2 \ldots$, as a sequence of set of events in $AP$ (\textit{i.e.}, for each $i$ we have that $ev_i \in 2^{AP}$). For brevity, we name $\Sigma=2^{AP}$ the powerset of atomic propositions. As usual, $\Sigma^*$ is the \emph{set of all possible finite traces} over $\Sigma$, and $\Sigma^\omega$ is the \emph{set of all possible infinite traces} over $\Sigma$.

The standard formalism to specify formal properties in RV is Linear Temporal Logic (LTL)~\cite{DBLP:conf/focs/Pnueli77}. 
The syntax of LTL is as follows:
\begin{eqnarray*}
\psi & ::= & q \mid \neg \psi \mid \psi \land \psi \mid X \psi \mid (\psi U \psi)
\end{eqnarray*}
where $q\in AP$ is an event (a proposition), $\psi$ is a formula, $U$ stands for \emph{until}, and $X$ stands for \emph{next-time}. 

Let $\rho\in\Sigma^\omega$ be an infinite sequence of events over $\Sigma$, the semantics of LTL is as follows:

\begin{tabbing}
	$\rho \models q$ \ \ \ \ \ \ \ \ \ \ \=  iff \ \ \ \ \= $q \in \rho_1$\\
	$\rho \models \neg \psi$ \> iff \>  $\rho \not\models \psi$\\
	$\rho \models \psi \land \psi'$ \>iff\>  $\rho \models \psi$  and  $\rho \models \psi'$\\
	$\rho \models X \psi$ \>iff\> $\rho_{\geq 2} \models \psi$\\
	$\rho \models \psi U \psi'$  \>iff\> for some $k \geq 1$, $\rho_{\geq k} \models \psi'$, and  for all $j$, $1 \leq j < k \Rightarrow \rho_{\geq j} \models \psi$
\end{tabbing}


Thus, given an LTL property $\psi$, we denote $\sem{\psi}$ the language of the property, \textit{i.e.}, the set of traces which satisfy $\psi$; namely $\sem{\psi} = \{ \rho \;|\; \rho \models \psi\}$.

\begin{definition}[Monitor]\label{rv-def}
Let $AP$ be the alphabet of atomic propositions, $\Sigma=2^{AP}$ be its powerset, and $\psi$ be an LTL property. Then, a monitor for $\psi$ is a function
$Mon_{\psi}:\Sigma^*\rightarrow\mathbb{B}_3$, where
$\mathbb{B}_3=\{\top, \bot, ?\}$:
$$
Mon_{\psi}(\rho) =
\left\{
\bgroup
\def\arraystretch{1.2}
  \begin{tabular}{cl}
  $\top$ & {\qquad$\forall_{\rho' \in \Sigma^\omega} \:\: \rho \cdot \rho' \in \sem{\psi}$}\\
  $\bot$ & {\qquad$\forall_{\rho' \in \Sigma^\omega} \:\: \rho \cdot \rho' \notin\sem{\psi}$}\\
  $?$ & {\qquad$otherwise.$}\\
  \end{tabular}
\egroup
\right.
$$
\end{definition}
Intuitively, a monitor returns~$\top$ if all continuations ($\rho'$) of $\rho$ satisfy $\psi$; $\bot$ if all possible
continuations of $\rho$ violate $\psi$; $?$ otherwise. The first two outcomes are standard representations of satisfaction and violation, while the third is specific to RV. In more detail, it denotes when the monitor cannot conclude any verdict yet. This is closely related to the fact that RV is applied while the system is still running, and not all information about it are available. For instance, a property might be currently satisfied (resp., violated) by the system, but violated (resp., satisfied) in the (still unknown) future. The monitor can only safely conclude any of the two final verdicts ($\top$ or $\bot$) if it is sure such verdict will never change. The addition of the third outcome symbol $?$ helps the monitor to represent its position of uncertainty w.r.t. the current system execution.


\subsection{Negative and Positive Sub-models}

Now, we recall two definitions of sub-models, defined in \cite{DBLP:journals/corr/abs-2112-13621}, that we will use in our verification procedure. We start with the definition of negative sub-models.

\begin{definition}[Negative sub-model]\label{subneg}
	Given an iCGS $M = \langle \Ag, AP, S, s_I,  \{Act_i\}_{i \in \Ag}, \allowbreak \{\sim_i\}_{i \in \Ag}, d, \delta, V \rangle$, we denote with $M_n = \langle \Ag, AP, S_n, s_I, \{Act_i\}_{i \in \Ag}, \{\sim^n_i\}_{i \in \Ag}, d_n, \allowbreak \delta_n,  V_n \rangle$ a negative sub-model of $M$, formally $M_n \subseteq M$, such that:
	\begin{itemize}
		\item the set of states is defined as $S_n = S^\star \cup \{s_\bot\}$, where $S^\star \subseteq S$, and $s_I \in S^\star$ is the initial state.
		\item $\sim^n_i$ is defined as the corresponding $\sim_i$ restricted to $S^\star$.
		\item The protocol function is defined as $d_n : \Ag \times S_n \rightarrow (2^{Act}\setminus\emptyset)$, where $d_n(i,s) = d(i,s)$, for every $s \in S^\star$ and $d_n(i,s_\bot) = Act_i$, for all $i \in \Ag$.
		\item The transition function is defined as $\delta_n : S_n \times ACT \rightarrow S_n$, where given a transition $\delta(s, \vec{a}) = s'$, if $s, s' \in S^\star$ then $\delta_n(s, \vec{a}) = \delta(s, \vec{a}) = s'$ else if $s' \in S \setminus S^\star$ and $s \in S_n$ then $\delta_n(s, \vec{a}) = s_\bot$. 
		\item for all $s\in S^\star$, $V_n(s) = V(s)$ and $V_n(s_\bot) = \emptyset$.
	\end{itemize}
\end{definition}

Now, we present the definition of positive sub-models. 

\begin{definition}[Positive sub-model]\label{subpos}
	Given an iCGS $M = \langle \Ag, AP, S, s_I,  \{Act_i\}_{i \in \Ag}, \allowbreak \{\sim_i\}_{i \in \Ag}, d, \delta, V \rangle$, we denote with $M_p = \langle \Ag, AP, S_p, s_I, \{Act_i\}_{i \in \Ag}, \{\sim^p_i\}_{i \in \Ag}, d_p, \allowbreak \delta_p, V_p \rangle$ a positive sub-model of $M$, formally $M_p \subseteq M$, such that:
	\begin{itemize}
		\item the set of states is defined as $S_p = S^\star \cup \{s_\top\}$, where $S^\star \subseteq S$, and $s_I \in S^\star$ is the initial state.
		\item $\sim^p_i$ is defined as the corresponding $\sim_i$ restricted to $S^\star$.
		\item The protocol function is defined as $d_p : \Ag \times S_p \rightarrow (2^{Act}\setminus\emptyset)$, where $d_p(i,s) = d(i,s)$, for every $s \in S^\star$ and $d_p(i,s_\top) = Act_i$, for all $i \in \Ag$.
		\item The transition function is defined as $\delta_p : S_p \times ACT \rightarrow S_p$, where given a transition $\delta(s, \vec{a}) = s'$, if $s, s' \in S^\star$ then $\delta_p(s, \vec{a}) = \delta(s, \vec{a}) = s'$ else if $s' \in S \setminus S^\star$ and $s \in S_p$ then $\delta_p(s, \vec{a}) = s_\top$. 
		\item for all $s\in S^\star$, $V_p(s) = V(s)$ and $V_p(s_\top) = AP$.
	\end{itemize}
\end{definition}

Note that, the above sub-models are still iCGSs.

We conclude this part by recalling two preservation results presented in \cite{DBLP:journals/corr/abs-2112-13621}.

We start with a preservation result from negative sub-models to the original model.

\begin{lemma}\label{lemma:negsub}
	Given a model $M$, a negative sub-model with perfect information $M_n$ of $M$, and a formula $\varphi$ of the form $\varphi = \naww{A} \psi$ (resp., $\all{A} \psi$) for some $A \subseteq Ag$. For any $s \in S_n \setminus \{s_\bot\}$, we have that:
	\begin{center}
		$M_n, s \models \varphi \Rightarrow M, s \models \varphi$
	\end{center}
\end{lemma}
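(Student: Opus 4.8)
The plan is to extract a winning strategy in $M_n$ and replay it in $M$, arguing by a case analysis on whether a $\sigma_A$-compatible path stays inside the retained region $S^\star$ or escapes it. First I would use the hypothesis $M_n, s \models \naww{A}\psi$ to fix a uniform strategy $\sigma^n_A$ for $A$ witnessing it in $M_n$. Here the perfect-information assumption on $M_n$ is exactly what I would lean on: since each $\sim^n_i$ is the identity on $S^\star$, any two distinct histories over $S^\star$ are distinguishable, so the uniformity constraint is vacuous on histories contained in $S^\star$ and $\sigma^n_A$ may be read off as an ordinary perfect-recall, perfect-information strategy.

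Next I would define a strategy $\sigma_A$ in $M$ that coincides with $\sigma^n_A$ on every history contained in $S^\star$ and follows a fixed default elsewhere. Before using it I would verify that $\sigma_A$ is a legitimate uniform strategy of $M$: the only possible clash is between an $S^\star$-history and an indistinguishable history that has already left $S^\star$, and I would rule this out (or absorb it into the default) using again that $\sim_i$ restricted to $S^\star$ is trivial. Then I would pick an arbitrary $p \in out(s, \sigma_A)$ and split into two cases. If $p$ never leaves $S^\star$, then, because $\delta_n$ and $\delta$ agree on $S^\star$ and $V_n \equiv V$ there, $p$ is also $\sigma^n_A$-compatible in $M_n$ (protocols $d_n$ and $d$ agree, so the opponents' moves remain enabled) and the two models are indistinguishable along $p$. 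A sub-induction on the shape of $\psi$, invoking the statement of the lemma itself as the induction hypothesis on nested strategic subformulas evaluated at states $p_j \in S^\star$, then yields $(M,p)\models\psi$ from $(M_n,p)\models\psi$.

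The interesting case is when $p$ first leaves $S^\star$ at some step $k+1$. By construction the prefix $p_{\leq k}$ together with the escaping joint action is mirrored in $M_n$ by the path $p' = p_{\leq k}\cdot s_\bot^{\,\omega}$, which is again $\sigma^n_A$-compatible, so $(M_n, p')\models\psi$. What remains is to transfer satisfaction from $p'$, whose tail is the empty-labelled absorbing sink, to the genuine continuation of $p$ in $M$. This is the step I expect to be the \emph{main obstacle}: the two paths share only the prefix $p_{\leq k}$ and differ on every later valuation, so the transfer is not a purely syntactic matter. I would justify it from the defining choice $V_n(s_\bot)=\emptyset$, which makes $s_\bot^{\,\omega}$ the pointwise-minimal continuation, so that any positive witness making $p'$ satisfy $\psi$ must already be located in the shared prefix and hence survives in $p$, while no atom appearing after step $k$ in $M$ can destroy it. Formalising this amounts to a monotonicity argument over the structure of $\psi$, and the delicate point is precisely the treatment of negated atoms, for which the empty valuation is favourable rather than adversarial; this is where the orientation of the ``negative'' sink must be matched carefully against the polarity of $\psi$. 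Finally, the dual case $\varphi=\all{A}\psi$, whose satisfaction quantifies universally over strategies, follows by applying the same decomposition to every strategy of $M$ restricted to $S^\star$.
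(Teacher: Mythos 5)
Your overall architecture---replay the winning strategy of $M_n$ inside $M$, split each outcome into the case where it stays in $S^\star$ and the case where it escapes, mirror an escaping outcome $p$ by $p_{\leq k}\cdot s_\bot^{\omega}$ in $M_n$, and transfer satisfaction by monotonicity against the empty-labelled sink---is the intended argument for this construction (note that the paper itself gives no proof here: the lemma is recalled from the cited technical report, so there is no in-paper proof to compare against line by line). Your treatment of the uniformity clash at the boundary of $S^\star$ is also essentially correct and can be closed exactly as you hint: since $\sim_i$ is an equivalence that is the identity on $S^\star$, an escaping history is indistinguishable from \emph{at most one} $S^\star$-history (point-wise identity forces uniqueness), so defining the ``default'' action on it to be the action prescribed on that unique $S^\star$-history is well defined, and condition \myii of the protocol function ($s \sim_i s'$ implies $d(i,s)=d(i,s')$) guarantees enabledness.

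The genuine gap is precisely the step you yourself call the main obstacle, and as you left it, it cannot be discharged: for arbitrary $ATL^*$ path formulas $\psi$, which admit negation, the transfer fails and the lemma is false under that reading. Take $M$ with transitions $s_0 \to s_1 \to s_1$, $V(s_0)=V(s_1)=\{q\}$, a single agent with a single action, and $S^\star=\{s_0\}$: then $M_n$ has the unique path $s_0 s_\bot^{\omega}$ with $V_n(s_\bot)=\emptyset$, so $M_n, s_0 \models \naww{A} X \neg q$, while $M, s_0 \not\models \naww{A} X \neg q$. The missing idea is the paper's $Preprocessing()$ step (line 1 of Algorithm~\ref{alg:full}): $\varphi$ is put into negation normal form and every negated atom is replaced by a fresh \emph{positive} atom in both $M$ and $\varphi$, so the $\psi$ to which the lemma is actually applied is negation-free. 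Under that convention your monotonicity argument does close: negation-free path formulas are monotone in the valuation sequence, $V_n(s_\bot)=\emptyset$ is point-wise minimal, hence any positive $\psi$ (or nested positive strategic subformula, evaluated at $s_\bot$ where every outcome is the empty-valuation path) satisfied along the tail $s_\bot^{\omega}$ is satisfied along \emph{every} continuation in $M$, while the shared prefix contributes identical labels since $V_n = V$ on $S^\star$. Without invoking this preprocessing (or an explicit positivity hypothesis on $\psi$), your proof does not go through; with it, the rest of your case analysis---including the dual $\all{A}$ case obtained by restricting an arbitrary uniform strategy of $M$ to $S^\star$-histories---is sound.
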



We also consider the preservation result from positive sub-models to the original model.

\begin{lemma}\label{lemma:possub}
	Given a model $M$, a positive sub-model with perfect information $M_p$ of $M$, and a formula $\varphi$ of the form $\varphi = \naww{A} \psi$ (resp., $\all{A} \psi$) for some $A \subseteq Ag$. For any $s \in S_p \setminus \{s_\top\}$, we have that:
	\begin{center}
		$M_p, s \not\models \varphi \Rightarrow M, s \not \models \varphi$ 
	\end{center} 
\end{lemma}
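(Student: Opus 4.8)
The plan is to establish the contrapositive, namely $M, s \models \varphi \Rightarrow M_p, s \models \varphi$ for $s \in S_p \setminus \{s_\top\}$, and to obtain it by transferring a winning strategy from $M$ to $M_p$. Concretely, take $\varphi = \naww{A}\psi$ and let $\tau_A$ be a joint strategy for $A$ witnessing $(M,s)\models\naww{A}\psi$, so that every $\tau_A$-compatible path from $s$ in $M$ satisfies $\psi$. Since $M_p$ has perfect information and, by Def.~\ref{subpos}, agrees with $M$ on $S^\star$ (same indistinguishability, same protocol since $d_p=d$ there, and same transitions and labelling on $S^\star$), I would define a strategy $\sigma_A$ in $M_p$ that copies $\tau_A$ on every history lying entirely in $S^\star$ and plays any enabled action once the history has entered $s_\top$ (this is always possible since $d_p(i,s_\top)=Act_i$). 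The dual case $\varphi=\all{A}\psi$ is handled symmetrically, transferring a strategy in the opposite direction over the opponents.

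The core of the argument is a path-correspondence analysis. Every $\sigma_A$-compatible path $\pi$ of $M_p$ from $s$ either (Case A) stays forever in $S^\star$, in which case it is literally a $\tau_A$-compatible path of $M$ and, because $V_p$ coincides with $V$ on $S^\star$, satisfies $\psi$ in $M_p$ iff it does in $M$; or (Case B) it reaches $s_\top$ at some first index $k$, so $\pi = s_1\cdots s_{k-1}\, s_\top\, s_\top\cdots$ with $s_1\cdots s_{k-1}\in (S^\star)^+$. In Case~B I would exhibit the genuine $M$-successor $s_k'=\delta(s_{k-1},\vec{a})\in S\setminus S^\star$ induced by the same joint action and extend it to a full $\tau_A$-compatible path $\pi'=s_1\cdots s_{k-1}\,s_k'\cdots$ of $M$, which satisfies $\psi$ by the choice of $\tau_A$. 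The key observation is that, since $V_p(s_\top)=AP$, the suffix of $\pi$ from index $k$ makes every atom true at every position, so $\pi$ pointwise dominates $\pi'$ in atom-truth while sharing the prefix $s_1\cdots s_{k-1}$; a structural induction on $\psi$ then lets me conclude $(M_p,\pi)\models\psi$ from $(M,\pi')\models\psi$. For nested strategic subformulas evaluated along the path, the induction hypothesis is the statement of the lemma on smaller formulas, together with the fact that $s_\top$, being an all-true self-loop, satisfies every such subformula.

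The step I expect to be the main obstacle is precisely Case~B: making rigorous the claim that routing a path into the maximal state $s_\top$ can only help satisfaction. This is a monotonicity property of the $ATL^*$ path semantics with respect to the pointwise atom-truth order, and it is what forces the argument to proceed by induction over the whole structure of $\psi$ (Boolean, temporal, and strategic layers) rather than by a one-line simulation; the delicate cases are $U$ and $R$, where one must track how the witnessing index relates to the index $k$ at which $\pi$ enters $s_\top$, and the nested $\naww{B}\,/\,\all{B}$ operators, where one re-applies the lemma at intermediate states of $S^\star$ and invokes maximality at $s_\top$. As an alternative route, I would note that Lemma~\ref{lemma:possub} is the exact dual of Lemma~\ref{lemma:negsub}: complementing the labelling turns the all-true sink $s_\top$ into the all-false sink $s_\bot$ and sends $\naww{A}\psi$ to its dual, so the positive preservation result can instead be derived from the negative one via this complementation, trading the monotonicity bookkeeping for a verification of the duality correspondence.
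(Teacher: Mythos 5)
The paper does not actually prove Lemma~\ref{lemma:possub}: both preservation lemmas are \emph{recalled} from the companion report \cite{DBLP:journals/corr/abs-2112-13621}, so there is no in-paper proof to compare against, and your attempt has to be judged on its own merits. On those merits, your argument is the natural one and is essentially sound: proving the contrapositive by copying a winning $M$-strategy onto $M_p$ (unproblematic, since perfect information in $M_p$ makes uniformity vacuous there), splitting the $\sigma_A$-compatible paths of $M_p$ into those staying in $S^\star$ (which are literally $\tau_A$-compatible paths of $M$, with identical labelling) and those absorbed by $s_\top$, and then discharging the second case by a monotonicity argument --- the all-true self-loop $s_\top$ satisfies every subformula, including nested $\naww{B}/\all{B}$ ones, and the shared prefix transfers by structural induction with the lemma itself as induction hypothesis at states of $S^\star$. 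Your identification of the $U$/$R$ index bookkeeping and the nested strategic operators as the delicate points is accurate.

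Two caveats you should make explicit. First, the monotonicity step --- and indeed the lemma itself --- holds only for negation-free $\psi$: take $q$ true on all of $S^\star$, a state of $S\setminus S^\star$ reachable from $s$ where $q$ fails, and $\varphi = \naww{A} F \neg q$; then $M,s \models \varphi$, yet in $M_p$ the exit is rerouted to $s_\top$ where $V_p(s_\top)=AP$ makes $q$ true everywhere, so $M_p,s \not\models \varphi$, contradicting the contrapositive. Your proof tacitly assumes this positivity (you handle $R$, which exists only in the NNF syntax, and your ``pointwise domination in atom-truth'' order is meaningful only when atoms occur positively), but you never state the hypothesis; in the paper it is supplied by $Preprocessing()$, which rewrites $\varphi$ into NNF and replaces negated atoms by fresh positive atoms in both $M$ and $\varphi$. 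Second, the $\all{A}\psi$ case is not quite ``symmetric'': there you must transfer an \emph{arbitrary} strategy of the perfect-information model $M_p$ back to a \emph{uniform} strategy of the imperfect-information model $M$, which succeeds only because ``$M_p$ with perfect information'' forces each $\sim_i$ restricted to $S^\star$ to be the identity, so any strategy on histories through $S^\star$ is vacuously uniform and extends to a uniform strategy of $M$; your phrase ``transferring a strategy in the opposite direction over the opponents'' elides this, and it is the one place where imperfect information genuinely intervenes. (Your alternative duality route from Lemma~\ref{lemma:negsub} needs the same care: complementing the labelling also negates the atoms, which leaves the negation-free fragment, so the correspondence must be checked against the fresh-atom convention rather than taken for granted. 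Finally, you correctly assume $\delta_p(s_\top,\cdot)$ self-loops, which Definition~\ref{subpos} leaves implicit.)
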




\section{Our procedure}\label{sec:procedure}
In this section, we provide a procedure to handle games with imperfect information and perfect recall strategies, a problem in general undecidable. 
The overall model checking procedure is described in Algorithm \ref{alg:full}.
It takes in input a model $M$, a formula $\varphi$, and a trace $h$ (denoting an execution of the system) and calls the function $Preprocessing()$ to generate the negative normal form of $\varphi$ and to replace all negated atoms with new positive atoms inside $M$ and $\varphi$. After that, it calls the function $FindSub$-$models()$ to generate all the positive and negative sub-models that represent all the possible sub-models with perfect information of $M$. Then, there is a while loop (lines 4-7) that for each candidate checks the sub-formulas true on the sub-models via $CheckSub$-$formulas()$ and returns a result via $RuntimeVerification()$.
For the algorithms and additional details regarding the procedures $Preprocessing()$, $FindSub$-$models()$, and $CheckSub$-$formulas()$ see \cite{DBLP:journals/corr/abs-2112-13621}.

\begin{algorithm}
	\footnotesize
	\caption{$ModelCheckingProcedure$ ($M$, $\varphi$, $h$)}
	\label{alg:full}
	\begin{algorithmic}[1] 
		\State $Preprocessing(M, \varphi)$;
		\State $candidates$ = $FindSub$-$models$$(M, \varphi)$;
		\State $finalresult = \emptyset$;
		\While {$candidates$ is not empty}
		\State extract $\langle M_n, M_p \rangle$ from $candidates$;
		\State $result$ = $CheckSub$-$\!formulas$$(\langle M_n, M_p \rangle,\varphi)$;
		\State  $finalresult$ = $RuntimeVerification(M, \varphi, h, result)$ $\cup$ $finalresult$; 
		\EndWhile
		\State \Return $finalresult$;
	\end{algorithmic}
\end{algorithm}

Now, we will focus on the last step, the procedure $RuntimeVerification()$. 
It is performed at runtime, directly on the actual system. In previous steps, the sub-models satisfying (resp., violating) sub-properties $\varphi'$ of $\varphi$ are generated, and listed into the set $result$.
In Algorithm~\ref{alg:rv}, we report the algorithm performing runtime verification on the actual system. Such algorithm gets in input the model $M$, an ATL property $\varphi$ to verify, an execution trace $h$ of events observed by executing the actual system, and the set $result$ containing the sub-properties of $\varphi$ that have been checked on sub-models of $M$.
First, in lines 1-4, the algorithm updates the model $M$ with the atoms corresponding to the sub-properties verified previously on sub-models of $M$. This step is necessary to keep track explicitly inside $M$ of where the sub-properties are verified (resp., violated). This last aspect depends on which sub-model had been used to verify the sub-property (whether negative or positive). After that, the formula $\varphi$ needs to be updated accordingly to the newly introduced atoms. This is obtained through updating the formula, by generating at the same time two new versions $\psi_n$ and $\psi_p$ for the corresponding negative and positive versions (lines 6-14). Once $\psi_n$ and $\psi_p$ have been generated, they need to be converted into their corresponding LTL representation to be verified at runtime. Note that, $\psi_n$ and $\psi_p$ are still ATL properties, which may contain strategic operators. Thus, this translation is obtained by removing the strategic operators, leaving only the temporal ones (and the atoms). The resulting two new LTL properties $\varphi_n$ and $\varphi_p$ are so obtained (lines 15-16). Finally, by having these two LTL properties, the algorithm proceeds generating (using the standard LTL monitor generation algorithm~\cite{DBLP:journals/tosem/BauerLS11}) the corresponding monitors $Mon_{\varphi_n}$ and $Mon_{\varphi_p}$. Such monitors are then used by Algorithm~\ref{alg:rv} to check $\varphi_n$ and $\varphi_p$ over an execution trace $h$ given in input. The latter consists in a trace observed by executing the system modelled by $M$ (so, the actual system). Analysing $h$ the monitor can conclude the satisfaction (resp., violation) of the LTL property under analysis. However, only certain results can actually be considered valid. Specifically, when $Mon_{\varphi_n}(h) = \top$, or when $Mon_{\varphi_p}(h) = \bot$. The other cases are considered undefined, since nothing can be concluded at runtime. 
The reason why line 17 and line 20's conditions are enough to conclude $\top$ and $\bot$ (resp.) directly follow from the following lemmas.

\begin{algorithm}
	\footnotesize
	\caption{$RuntimeVerification$ ($M$, $\varphi$, $h$, $result$)}
	\label{alg:rv}
	\begin{algorithmic}[1] 
		\State $k$ = $?$;
		\For{$s \in S$}
		\State take set $atoms$ from $result(s)$;
		\State $UpdateModel$($M$, $s$, $atoms$);	
		\EndFor
		\State $\varphi_{mc} = \emptyset$;
		\For{$\langle s,\psi,atom \rangle \in result$}
		\State $\varphi_{mc} = \varphi_{mc} \cup \psi$;
		\EndFor
		\State $\varphi_{rv} = SubFormulas(\varphi) \setminus \varphi_{mc}$;
		\State $\psi_n = \varphi$, $\psi_p = \varphi$; 
		\While{$result$ is not empty}
		\State extract $\langle s, \psi, vatom_{\psi}\rangle$ from $result$; 
		\If{$v = n$}
		\State $\psi_n$ = $UpdateFormula$($\psi_n$, $\psi$, $natom_{\psi}$);
		\Else
		\State $\psi_p$ = $UpdateFormula$($\psi_p$, $\psi$, $patom_{\psi}$);
		\EndIf
		\EndWhile
		\State $\varphi_n$ = $FromATLtoLTL$($\psi_n$, $n$);
		\State $\varphi_p$ = $FromATLtoLTL$($\psi_p$, $p$);
		\State $Mon_{\varphi_p} = GenerateMonitor(\varphi_p)$;
		\State $Mon_{\varphi_n} = GenerateMonitor(\varphi_n)$;
		\If {$Mon_{\varphi_n}(h) = \top$}
		\State $k$ = $\top$;
		\EndIf
		\If {$Mon_{\varphi_p}(h) = \bot$}
		\State $k$ = $\bot$;
		\EndIf
		\State $\varphi_{unchk} = \emptyset$;
		\For{$\varphi' \in \varphi_{rv}$}
		\State $Mon_{\varphi'} = GenerateMonitor(\varphi')$;
		\If{$Mon_{\varphi'}(h) = \;?$}
		\State $\varphi_{rv} = \varphi_{rv} \setminus \varphi'$;
		\State $\varphi_{unchk} = \varphi_{unchk} \cup \varphi'$;
		\EndIf
		\EndFor
		\State \Return $\langle k, \varphi_{mc}, \varphi_{rv}, \varphi_{unchk} \rangle$;
	\end{algorithmic}
\end{algorithm} 

We start with a preservation result from the truth of the monitor output to ATL$^*$ model checking.

\begin{lemma}\label{lemma:rv1}
	Given a model $M$ and a formula $\varphi$, for any history $h$ of $M$ starting in $s_I$, we have that:
	\begin{center}
		$Mon_{\varphi_{LTL}}(h) = \top \;\implies\; M,s_I \models \varphi_{Ag}$ 
	\end{center}
	where $\varphi_{LTL}$ is the variant of $\varphi$ where all strategic operators are removed and $\varphi_{Ag}$ is the variant of $\varphi$ where all strategic operators are converted into $\naww{Ag}$.
\end{lemma}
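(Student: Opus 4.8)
The plan is to reduce the statement to a purely \emph{structural} correspondence between the LTL reading $\varphi_{LTL}$ evaluated on the trace of a path and the grand-coalition reading $\varphi_{Ag}$ evaluated on that same path, and then to use the monitor verdict only to supply a single infinite witness path extending $h$. Write $\lambda(p)=V(p_1)V(p_2)\cdots\in\Sigma^\omega$ for the trace induced by an infinite path $p$, and $\lambda(h)$ for the finite trace induced by a history $h$ (this is the sequence actually fed to the monitor). Throughout I would use two facts about the grand coalition $Ag$. First, since $\wf{Ag}=\emptyset$, in any $\sigma_{Ag}$-compatible path every component of each joint action is fixed by the agents' strategies, so by determinism of $\delta$ the set $out(s,\sigma_{Ag})$ is a singleton; hence $\naww{Ag}\psi$ holds at $s$ exactly when some uniform joint strategy produces a single path satisfying $\psi$. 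Second, and this is the crux, for every infinite path $p$ realizable in $M$ (i.e.\ with $p_{j+1}=\delta(p_j,\vec a)$ for some enabled $\vec a$ at each $j$) there is a uniform perfect-recall joint strategy $\sigma_{Ag}$ whose unique outcome from $p_1$ is $p$.

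I would establish this second fact directly. For each $j$ pick a joint action $\vec a^{\,j}$ enabled at $p_j$ with $\delta(p_j,\vec a^{\,j})=p_{j+1}$, and set $\sigma_i(p_{\leq j})=a^{\,j}_i$ for every $i\in Ag$. The only thing to verify is uniformity, i.e.\ consistency on $\sim_i$-classes. Because $h\sim_i h'$ forces $|h|=|h'|$, the prefixes $p_{\leq 1},p_{\leq 2},\dots$ lie in pairwise distinct $\sim_i$-classes, so the prescribed values never collide; I then extend $\sigma_i$ to the remaining histories by choosing, on each class, any action enabled at the (common, by the uniformity of $d$) last state, which is nonempty since $d(i,\cdot)\neq\emptyset$. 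Replaying this well-defined uniform strategy from $p_1$ reproduces $p$ step by step, so $out(p_1,\sigma_{Ag})=\{p\}$.

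With these tools, the heart is an auxiliary claim proved by structural induction on the $ATL^*$ formula $\chi$, stated simultaneously for state and path formulas: for every infinite path $p$ of $M$, $\lambda(p)\models\chi_{LTL}$ implies $(M,p)\models\chi_{Ag}$ (for a state formula this reads $(M,p_1)\models\chi_{Ag}$). The atomic and Boolean cases are immediate since $\lambda(p)_1=V(p_1)$, and the $X$/$U$ cases push through on suffixes because $\lambda(p)_{\geq k}=\lambda(p_{\geq k})$ and the LTL and $ATL^*$ temporal clauses coincide under $\lambda$. The decisive case is $\chi=\naww{\Gamma}\psi$, where $\chi_{LTL}=\psi_{LTL}$ and $\chi_{Ag}=\naww{Ag}\psi_{Ag}$: assuming $\lambda(p)\models\psi_{LTL}$, I build via the previous paragraph the uniform grand-coalition strategy with $out(p_1,\sigma_{Ag})=\{p\}$, and the induction hypothesis applied to $\psi$ on the path $p$ gives $(M,p)\models\psi_{Ag}$; hence $\naww{Ag}\psi_{Ag}$ holds at $p_1$, which is exactly $(M,p)\models\chi_{Ag}$.

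Finally I would discharge the lemma. Since every state has an enabled joint action and $\delta$ is deterministic, the history $h$ extends to some infinite path $p$ of $M$ with $p_{\leq|h|}=h$, so $\lambda(p)=\lambda(h)\cdot\rho'$ for some $\rho'\in\Sigma^\omega$. From $Mon_{\varphi_{LTL}}(h)=\top$ and Def.~\ref{rv-def} we get $\lambda(h)\cdot\rho'\in\sem{\varphi_{LTL}}$, i.e.\ $\lambda(p)\models\varphi_{LTL}$. Applying the auxiliary claim to $\varphi$ and $p$ yields $(M,p)\models\varphi_{Ag}$, that is $(M,p_1)\models\varphi_{Ag}$; since $p_1=s_I$ this is $M,s_I\models\varphi_{Ag}$, as required. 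I expect the main obstacle to be precisely the uniformity argument of the second paragraph: one must be certain that imperfect information never prevents the grand coalition from replaying a chosen path, and the observation that makes this work is that indistinguishable histories have equal length, so the prefixes of a single path are pairwise distinguishable and impose no conflicting strategic choices.
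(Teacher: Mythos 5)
Your proof is correct in substance and follows essentially the same route as the paper's: the monitor verdict supplies one concrete infinite extension of $h$, the grand coalition's outcome set $out(s,\sigma_{Ag})$ is a singleton, a replay strategy for $Ag$ turns that single path into the witness for each $\naww{Ag}$, and nesting is handled bottom-up (the paper says ``classic bottom-up approach'' where you run an explicit structural induction). What you add --- and what the paper leaves entirely implicit --- is the verification that the replay strategy is \emph{uniform}: your observation that synchronous indistinguishability forces $|h|=|h'|$, so the prefixes of a single path lie in pairwise distinct $\sim_i$-classes, together with protocol uniformity ($s\sim_i s'$ implies $d(i,s)=d(i,s')$) to fill in the remaining histories consistently, is exactly the detail the paper skips. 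You are also more careful than the paper in distinguishing the state sequence $h$ from the label trace $\lambda(h)$ actually fed to the monitor, and in fixing a single realizable extension rather than quantifying over all of $S^\omega$.

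One genuine caveat: your auxiliary claim, stated by structural induction over the full $ATL^*$ syntax, does not survive the negation case, which you dismiss as ``immediate.'' The implication $\lambda(p)\models\chi_{LTL}\Rightarrow(M,p)\models\chi_{Ag}$ is one-directional, so $\neg$ applied above a strategic operator breaks it: for $\varphi=\neg\naww{\Gamma}Xq$ one has $\varphi_{LTL}=\neg Xq$ and $\varphi_{Ag}=\neg\naww{Ag}Xq$, and the monitor can return $\top$ on $h$ (because $q\notin V(h_2)$) while some other enabled successor of $s_I$ does satisfy $q$, giving $M,s_I\models\naww{Ag}Xq$, i.e.\ $M,s_I\not\models\varphi_{Ag}$. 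The lemma is only true --- and your induction only goes through --- in the context the paper's preprocessing guarantees, namely negation normal form with negated atoms replaced by fresh positive atoms, so that negation sits only on atoms; you should restrict the induction to that fragment (adding the $\vee$, $\all{\Gamma}$ and $R$ clauses, all monotone and unproblematic, with $\all{\Gamma}$ handled by the same replay argument). This is a shared blind spot rather than a defect relative to the paper, since the paper's own proof only treats $\varphi=\naww{\Gamma}\psi$ with $\psi$ quantifier-free and never mentions negation. A final minor point: when extending $\sigma_i$ beyond the prefixes of $p$, on a class containing some $p_{\leq j}$ you must assign the prescribed action $a^j_i$ to every $h'\sim_i p_{\leq j}$ (it is enabled at $last(h')$ precisely by protocol uniformity), not ``any action''; only classes disjoint from the prefixes admit an arbitrary enabled choice.
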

\begin{proof}
	First, consider the formula $\varphi = \naww{\Gamma} \psi$, in which $\Gamma \subseteq Ag$ and $\psi$ is a temporal formula without quantifications. So, $\varphi_{LTL} = \psi$ and $\varphi_{Ag} = \naww{Ag} \psi$.
	By Def.\ref{rv-def} we know that $Mon_{\varphi_{LTL}}(h) = \top$ if and only if for all path $p$ in $S^\omega$ we have that $h \cdot p$ is in $\llbracket\varphi_{LTL}\rrbracket$. Note that, the latter is the set of paths that satisfy $\psi$, \textit{i.e.}, $\llbracket\varphi_{LTL}\rrbracket = \{ p \mid M, p \models \psi\}$. By Def.\ref{def:ATL*} we know that $M,s_I \models \varphi_{Ag}$ if and only if there exist a strategy profile $\sigma_{Ag}$ such that for all paths $p$ in $out(s_I,\sigma_{Ag})$ we have that $M,p \models \psi$. Notice that, since the strategic operator involves the whole set of agents, $out(s_I,\sigma_{Ag})$ is composed by a single path.
	Thus, to guarantee that $\varphi_{Ag}$ holds in $M$, our objective is to construct from $s_I$ the history $h$ as prefix of the unique path in $out(s_I,\sigma_{Ag})$. Since we have $\naww{Ag}$ as strategic operator, this means that there is a way for the set of agents to construct $h$ starting from $s_I$ and the set $out(s_I,\sigma_{Ag})$ becomes equal to $\{p\}$, where $p = h \cdot p'$, for any $p' \in S^{\omega}$. From the above reasoning, the result follows.
	
	To conclude the proof, note that if we have a formula with more strategic operators then we can use a classic bottom-up approach.
\end{proof}

Now, we present a preservation result from the falsity of the monitor output to ATL$^*$ model checking.

\begin{lemma}\label{lemma:rv2}
	Given a model $M$ and a formula $\varphi$, for any history $h$ of $M$ starting in $s_I$, we have that:
	\begin{center}
		$Mon_{\varphi_{LTL}}(h) = \bot \;\implies\; M,s_I \not\models \varphi_{\emptyset}$ 
	\end{center}
	where $\varphi_{LTL}$ is the variant of $\varphi$ where all strategic operators are removed and $\varphi_{\emptyset}$ is the variant of $\varphi$ where all strategic operators are converted into $\naww{\emptyset}$.
\end{lemma}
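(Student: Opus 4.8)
The plan is to mirror the proof of Lemma~\ref{lemma:rv1}, exploiting the duality between the two strategic operators $\naww{Ag}$ and $\naww{\emptyset}$. First I would reduce to the base case $\varphi = \naww{\Gamma} \psi$, where $\Gamma \subseteq Ag$ and $\psi$ is a temporal formula free of further quantifications, so that $\varphi_{LTL} = \psi$ and $\varphi_{\emptyset} = \naww{\emptyset} \psi$. The key observation to set up is that, by Def.~\ref{satisfaction}, $M, s_I \models \naww{\emptyset} \psi$ holds iff there exists a (trivially empty) strategy profile $\sigma_{\emptyset}$ such that \emph{all} paths in $out(s_I, \sigma_{\emptyset})$ satisfy $\psi$. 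Since $\Gamma = \emptyset$, no agent's behaviour is constrained by the strategy, and therefore $out(s_I, \sigma_{\emptyset})$ coincides with the set of \emph{all} paths of $M$ starting in $s_I$. This is precisely the opposite extreme from Lemma~\ref{lemma:rv1}, where the full coalition $Ag$ pinned down a single path.

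\textbf{The contrapositive argument.}
From Def.~\ref{rv-def}, the hypothesis $Mon_{\varphi_{LTL}}(h) = \bot$ means that for every continuation $p \in S^{\omega}$ we have $h \cdot p \notin \sem{\psi}$; equivalently, no path of the form $h \cdot p$ satisfies $\psi$. The core of the argument is then to exhibit a single path through $M$, starting in $s_I$, that has $h$ as a prefix and that lies in $out(s_I, \sigma_{\emptyset})$. Because $h$ is by assumption a genuine history of $M$ starting in $s_I$, it is realizable by some sequence of enabled joint actions, and since $\Gamma = \emptyset$ imposes no constraint, any such realization extends to a full path $p^{\star} = h \cdot p'$ belonging to $out(s_I, \sigma_{\emptyset})$. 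By the monitor hypothesis this path does not satisfy $\psi$. Hence there is at least one path in $out(s_I, \sigma_{\emptyset})$ violating $\psi$, so the universal condition ``for all $p \in out(s_I, \sigma_{\emptyset})$, $M, p \models \psi$'' fails; as $\sigma_{\emptyset}$ is the only strategy profile available for the empty coalition, $M, s_I \not\models \naww{\emptyset}\psi = \varphi_{\emptyset}$, which is the desired conclusion.

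\textbf{Main obstacle and closing.}
The step I expect to require the most care is justifying that the witnessing path $p^{\star}$ genuinely belongs to $out(s_I, \sigma_{\emptyset})$, \emph{i.e.}, that $h$ can be extended to a $\sigma_{\emptyset}$-compatible infinite path at all. This rests on the fact that $h$ is a history of $M$ (so its transitions use enabled joint actions) and that $\delta$ is total on enabled actions, guaranteeing an infinite continuation exists; the emptiness of $\Gamma$ then makes every enabled continuation automatically $\sigma_{\emptyset}$-compatible. Finally, to lift the base case to formulas with nested strategic operators, I would invoke the same bottom-up argument as in Lemma~\ref{lemma:rv1}, processing innermost quantifiers first and replacing each $\naww{\Gamma}$ by $\naww{\emptyset}$ while preserving the temporal skeleton, so that the falsity verdict propagates outward.
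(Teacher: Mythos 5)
Your proposal is correct and follows essentially the same route as the paper's own proof: reduce to the base case $\varphi = \naww{\Gamma}\psi$, observe that $out(s_I,\sigma_{\emptyset})$ comprises all paths of $M$ from $s_I$ since the empty coalition constrains nothing, pick a path of the form $h \cdot p'$ as the witness violating $\psi$, and lift to nested strategic operators by the same bottom-up argument. Your only addition is the explicit justification that $h$, being a genuine history of $M$ with enabled joint actions and a total transition function, extends to an infinite $\sigma_{\emptyset}$-compatible path --- a step the paper's proof takes for granted.
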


\begin{proof}
	First, consider the formula $\varphi = \naww{\Gamma} \psi$, in which $\Gamma \subseteq Ag$ and $\psi$ is a temporal formula without quantifications. So, $\varphi_{LTL} = \psi$ and $\varphi_{\emptyset} = \naww{\emptyset} \psi$.
	By Def.\ref{rv-def} we know that $Mon_{\varphi_{LTL}}(h) = \bot$ if and only if for all path $p$ in $S^\omega$ we have that $h \cdot p$ is not in $\llbracket\varphi_{LTL}\rrbracket$. Note that, the latter is the set of paths that satisfy $\psi$, \textit{i.e.}, $\llbracket \varphi_{LTL}\rrbracket = \{ p \mid M, p \models \psi\}$. By Def.\ref{def:ATL*} we know that $M,s_I \not \models \varphi_{\emptyset}$ if and only if for all strategy profiles $\sigma_{\emptyset}$, there exists a path $p$ in $out(s_I,\sigma_{\emptyset})$ such that $M,p \not \models \psi$. Notice that, since the strategic operator is empty then $out(s_I,\sigma_{\emptyset})$ is composed by all the paths in $M$.
	Thus, to guarantee that $\varphi_{\emptyset}$ does not hold in $M$, our objective is to select a path $p$ in $out(s_I,\sigma_{\emptyset})$ starting from $s_I$, where $p = h \cdot p'$, for any $p' \in S^{\omega}$. Given the assumption that $h \cdot p$ is not in $\llbracket\varphi_{LTL}\rrbracket$ then the result follows.
	
	To conclude the proof, note that if we have a formula with more strategic operators then we can use a classic bottom-up approach.
\end{proof}

It is important to evaluate in depth the meaning of the two lemmas presented above, we do this in the following remark.

\begin{remark}
	Lemma~\ref{lemma:rv1} and~\ref{lemma:rv2} show a preservation result from runtime verification to ATL$^*$ model checking that needs to be discussed. If our monitor returns true we have two possibilities: 
	\begin{enumerate}
		\item the procedure found a negative sub-model in which the original formula $\varphi$ is satisfied then it can conclude the verification procedure by using RV only by checking that the atom representing $\varphi$ holds in the initial state of the history $h$ given in input;
		\item a sub-formula $\varphi'$ is satisfied in a negative sub-model and at runtime the formula $\varphi_{Ag}$ holds on the history $h$ given in input.
	\end{enumerate}
	While case 1. gives a preservation result for the formula $\varphi$ given in input, case 2. checks formula $\varphi_{Ag}$ instead of $\varphi$. That is, it substitutes $Ag$ as coalition for all the strategic operators of $\varphi$ but the ones in $\varphi'$. So, our procedure approximates the truth value by considering the case in which all the agents in the game collaborate to achieve the objectives not satisfied in the model checking phase. That is, while in~\cite{BelardinelliLM19, BelardinelliM20} the approximation is given in terms of information, in~\cite{BelardinelliLM18} is given in terms of recall of the strategies, and in~\cite{DBLP:journals/corr/abs-2112-13621} the approximation is given by generalizing the logic, here we give results by approximating the coalitions. Furthermore, we recall that our procedure produces always results, even partial. This aspect is strongly relevant in concrete scenario in which there is the necessity to have some sort of verification results. For example, in the context of swarm robots~\cite{DBLP:journals/ai/KouvarosL16}, with our procedure we can verify macro properties such as "the system works properly" since we are able to guarantee fully collaboration between agents because this property is relevant and desirable for each agent in the game. The same reasoning described above, can be applied in a complementary way for the case of positive sub-models and the falsity.
\end{remark}
%


To conclude this section we show and prove the complexity of our procedure.

\begin{theorem}\label{theo:full}
	Algorithm~\ref{alg:full} terminates in $2EXPTIME$. Moreover, Algorithm~\ref{alg:full} is sound: if the value returned is different from $?$, then $M \models \varphi_{Ag}$ iff $k = \top$.
\end{theorem}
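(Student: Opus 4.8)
The plan is to prove the two claims separately, first the $2EXPTIME$ time bound and then soundness. For the complexity I would bound each component of Algorithm~\ref{alg:full} in turn. The call to $FindSub$-$models$ enumerates the perfect-information sub-models of $M$; since each such sub-model is fixed by a choice of state subset $S^\star \subseteq S$ together with a sink ($s_\bot$ or $s_\top$), there are at most $2^{O(|S|)}$ candidate pairs $\langle M_n, M_p\rangle$, and each is constructible and testable for perfect information in polynomial time, so the while loop (lines 4--7) iterates at most exponentially often in $|S|$. Inside each iteration, $CheckSub$-$formulas$ performs $ATL^*$ model checking on perfect-information sub-models, which is $2EXPTIME$ in $|\varphi|$, and $RuntimeVerification$ (Algorithm~\ref{alg:rv}) builds LTL monitors for $\varphi_n$, $\varphi_p$ and the residual formulas in $\varphi_{rv}$ using the standard construction of~\cite{DBLP:journals/tosem/BauerLS11}, which is again doubly exponential in the formula size; evaluating each monitor on $h$ costs $O(|h|)$ times the (doubly exponential) monitor size, while all remaining bookkeeping ($UpdateModel$, $UpdateFormula$, $FromATLtoLTL$) is polynomial. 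Multiplying an exponential number of iterations by doubly exponential work per iteration stays within $2EXPTIME$ in the size of the input $(M,\varphi,h)$, which gives termination in $2EXPTIME$.

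For soundness I would argue by induction on the strategic nesting of $\varphi$, following the bottom-up scheme announced after Lemma~\ref{lemma:rv2}. The base case is a single strategic operator $\varphi = \naww{\Gamma}\psi$ with $\psi$ purely temporal: here no sub-formula has been discharged, so $\psi_n = \psi_p = \varphi$ and $\varphi_n = \varphi_p = \psi$, and the two conditions tested at lines 17 and 20 are exactly the hypotheses of Lemmas~\ref{lemma:rv1} and~\ref{lemma:rv2}. Thus $k = \top$ forces $Mon_{\varphi_n}(h) = \top$, and Lemma~\ref{lemma:rv1} yields $M, s_I \models \varphi_{Ag}$, while $k = \bot$ forces $Mon_{\varphi_p}(h) = \bot$ and Lemma~\ref{lemma:rv2} yields the complementary conclusion. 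For the inductive step, the sub-formulas that $CheckSub$-$formulas$ has certified on a negative (resp.\ positive) perfect-information sub-model are replaced in $\psi_n$ (resp.\ $\psi_p$) by fresh atoms; by Lemmas~\ref{lemma:negsub} and~\ref{lemma:possub} these atoms faithfully record the truth (resp.\ falsity) of those sub-formulas in the full model $M$, so that the updated model $M'$ and $M$ agree on them. Applying rv1/rv2 to the residual temporal skeleton and then unfolding the atoms back into their original sub-formulas (which keep their own coalitions, while every untreated strategic operator is read as $\naww{Ag}$) assembles the global statement $M \models \varphi_{Ag}$ precisely when $k = \top$.

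The step I expect to be the main obstacle is the $k = \bot$ half of the biconditional, and with it the well-definedness of $k$. Lemma~\ref{lemma:rv2} only delivers $M, s_I \not\models \varphi_{\emptyset}$, that is, that \emph{some} path refutes the temporal goal, whereas $M \not\models \varphi_{Ag}$ asks that the \emph{unique} grand-coalition outcome refutes it; reconciling the two requires exploiting that $Mon_{\varphi_p}(h) = \bot$ declares $h$ a bad prefix, so that \emph{every} continuation in $\Sigma^\omega$ violates $\varphi_p$, and then tying this universally quantified verdict to the single path of $out(s_I,\sigma_{Ag})$ that extends the observed execution $h$. Closely related, I would have to show that lines 17 and 20 cannot fire with opposite verdicts on the same $h$: this needs a monotonicity relation between the negative- and positive-atom substitutions guaranteeing $\sem{\varphi_n} \subseteq \sem{\varphi_p}$, so that $Mon_{\varphi_n}(h) = \top$ and $Mon_{\varphi_p}(h) = \bot$ are mutually exclusive. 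Establishing this consistency is what ultimately makes $k$ well defined and ensures that, whenever $k \neq \,?$, the returned verdict agrees with $M \models \varphi_{Ag}$.
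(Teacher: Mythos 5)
Your proposal follows essentially the same route as the paper's proof. The complexity analysis matches it almost line by line: polynomial preprocessing, $EXPTIME$ sub-model enumeration, $2EXPTIME$ $ATL^*$ model checking inside the loop via $CheckSub$-$formulas()$, doubly exponential monitor synthesis per \cite{DBLP:journals/tosem/BauerLS11} with monitor execution linear in the trace, all absorbed into $2EXPTIME$. (One discrepancy: the paper's proof bounds the candidate list polynomially, by $|S|$, whereas you allow $2^{O(|S|)}$ candidates; your coarser bound is in fact the one consistent with the paper's own experimental section, which says the candidate set ``can grow exponentially'', and either way the total stays within $2EXPTIME$.) For soundness, the paper does not phrase the argument as an explicit induction, but its structure coincides with your plan: it splits on $k=\top$ versus $k=\bot$, and within each case on whether $h$ is a genuine history of $M$ or a history decorated with atoms certified on sub-models, discharging the decorated case via Lemmas~\ref{lemma:negsub} and~\ref{lemma:possub} (under-/over-approximation of $h$) exactly as in your inductive step, and deferring nested strategic operators to ``a classic bottom-up approach'' just as you do.

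The obstacle you flagged at the end is real, and you should know the paper's proof does not actually overcome it. For $k=\bot$ the paper argues by contradiction: from $Mon_{\varphi_p}(h)=\bot$ it obtains $M\not\models\varphi_{\emptyset}$ by Lemma~\ref{lemma:rv2}, and then asserts that this ``implies $M\not\models\varphi_{Ag}$ by the semantics.'' But the valid entailment runs the other way: $\naww{\emptyset}\psi$ quantifies universally over all paths, while $\naww{Ag}\psi$ is existential (the grand coalition's strategy fixes a single outcome path), so $M\models\varphi_{\emptyset}$ implies $M\models\varphi_{Ag}$ and not conversely; $M\not\models\varphi_{\emptyset}$ is perfectly compatible with $M\models\varphi_{Ag}$. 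Your suggested repair---tying the bad-prefix verdict on $h$ to the unique path in $out(s_I,\sigma_{Ag})$---fails for the same reason: the witnessing grand-coalition strategy is under no obligation to extend the observed execution $h$. What the lemmas actually support is the asymmetric statement $k=\top\Rightarrow M\models\varphi_{Ag}$ and $k=\bot\Rightarrow M\not\models\varphi_{\emptyset}$, which is also the informal reading given in the paper's introduction, rather than the biconditional with $\varphi_{Ag}$ as stated in the theorem. Likewise, your concern about lines 17 and 20 firing on the same $h$ is not addressed in the paper either: no monotonicity of the form $\sem{\varphi_n}\subseteq\sem{\varphi_p}$ is proved, and algorithmically line 20 simply overwrites $k$. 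So the step you isolated as the main obstacle is precisely the step the published proof dispatches with an unjustified one-liner; your write-up is incomplete at that point, but correctly so.
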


\begin{proof}
	The preprocessing phase is polynomial in the size of the model and the formula. As described in \cite{DBLP:journals/corr/abs-2112-13621}, $FindSub$-$models()$ terminates in $EXPTIME$.
	The while loop in lines 3-7 needs to check all the candidates and in the worst case the size of the list of candidates is equal to the size of the set of states of $M$ (\textit{i.e.}, polynomial in the size of $M$). 
	About $CheckSub$-$formulas()$, as described in \cite{DBLP:journals/corr/abs-2112-13621}, the complexity is $2EXPTIME$ due to the ATL$^*$ model checking that is called in it.
	Finally, Algorithm \ref{alg:rv} terminates in $2EXPTIME$. In particular, loops in lines 2, 6, and 10 terminate in polynomial time with respect to the size of the model and the size of the formula. As described in \cite{DBLP:journals/tosem/BauerLS11}, to generate a monitor requires $2EXPTIME$ in the size of the formula and the execution of a monitor is linear in the size of the formula. 
	So, the total complexity is still determined by the subroutines and directly follows. 
	
	About the soundness, suppose that the value returned is different from $?$. In particular, either $k = \top$ or $k = \bot$. 
	If $M \models \varphi_{Ag}$ and $k = \bot$, then by Algorithm~\ref{alg:full} and~\ref{alg:rv}, we have that $Mon_{\varphi_p} (h) = \bot$.
	Now, there are two cases: (1) $h$ is an history of $M$ (2) there exists an history $h'$ of $M$ that differs from $h$ for some atomic propositions added to $h$ in lines 2-4 of Algorithm~\ref{alg:rv}.
	For (1), we know that $h$ is in $M$ and thus $Mon_{\varphi_p} (h) = \bot$ implies $M \not\models \varphi_{\emptyset}$ by Lemma~\ref{lemma:rv2} that implies $M \not\models \varphi_{Ag}$ by the semantics in Def.~\ref{satisfaction}, a contradiction. Hence, $k = \top$ as required. 
	For (2), suppose that $h$ has only one additional atomic proposition $atom_\psi$. 
	The latter means that $CheckSub-formulas()$ found a positive sub-model $M_p$ in which $M_p,s \models \psi$, for some $s \in S_p$. 
	By Lemma~\ref{lemma:possub}, for all $s \in S_p$, we know that if $M_p,s \not \models \psi$ then $M,s \not\models \psi$. So, $h$ over-approximates $h'$, i.e. there could be some states that in $h$ are labeled with $atom_\psi$ but they don't satisfy $\psi$ in $h$. Thus, if $Mon_{\varphi_p} (h) = \bot$ then $M \not\models \varphi_{\emptyset}$ by Lemma~\ref{lemma:rv2} that implies $M \not\models \varphi_{Ag}$, a contradiction. Hence, $k = \top$ as required. 
	Obviously, we can generalize the above reasoning in case $h$ and $h'$ differ for multiple atomic propositions.
	On the other hand, if $k = \top$ then by Algorithm~\ref{alg:full} and~\ref{alg:rv}, we have that $Mon_{\varphi_n} (h) = \top$.
	Again, there are two cases: (1) $h$ is an history of $M$ (2) there exists an history $h'$ of $M$ that differs from $h$ for some atomic propositions added to $h$ in lines 2-4 of Algorithm~\ref{alg:rv}.
	For (1), we know that $h$ is in $M$ and thus $Mon_{\varphi_n} (h) = \top$ implies $M \models \varphi_{Ag}$ by Lemma~\ref{lemma:rv1} as required. 
	For (2), suppose that $h$ has only one additional atomic proposition $atom_\psi$. The latter means that $CheckSub-formulas()$ found a negative sub-model $M_n$ in which $M_n,s \models \psi$, for some $s \in S_n$. By Lemma~\ref{lemma:negsub}, for all $s \in S_n$, we know that if $M_n,s \models \psi$ then $M,s \models \psi$. So, $h$ under-approximates $h'$, i.e. there could be some states that in $h$ are not labeled with $atom_\psi$ but they satisfy $\psi$ in $M$. Thus, if $Mon_{\varphi_n} (h) = \top$ then $M \models \varphi_{Ag}$ by Lemma~\ref{lemma:rv1}, as required.
\end{proof}

\section{Our tool}\label{sec: tool}

The algorithms presented previously have been implemented in Java\footnote{The tool can be found at \url{https://github.com/AngeloFerrando/StrategyRV}}. 
The resulting tool implementing Algorithm~\ref{alg:full} allows to extract all sub-models with perfect information ($CheckSub$-$formulas()$) that satisfy a strategic objective from a model given in input. The extracted sub-models, along with the corresponding sub-formulas, are then used by the tool to generate and execute the corresponding monitors over a system execution (Algorithm~\ref{alg:rv}).

In more detail, as shown in Figure~\ref{fig:tool}, the tool expects a model in input formatted as a Json file. This file is then parsed, and an internal representation of the model is generated. After that, the verification of a sub-model against a sub-formula is achieved by translating the sub-model into its equivalent ISPL (Interpreted Systems Programming Language) program, which then is verified by using the model checker MCMAS\footnote{\url{https://vas.doc.ic.ac.uk/software/mcmas/}}\cite{LomuscioRaimondi06c}. This corresponds to the verification steps performed in $CheckSub$-$formulas()$ (\textit{i.e.}, where static verification through MCMAS is used).
For each sub-model that satisfies this verification step, the tool produces a corresponding tuple; which contains the information needed by Algorithm~\ref{alg:rv} to complete the verification at runtime. 

\tikzset{every picture/.style={line width=0.75pt}} 

\begin{figure}[ht]
\centering

\scalebox{0.50}{

\begin{tikzpicture}[x=0.75pt,y=0.75pt,yscale=-1,xscale=1]

\draw   (216,109) -- (322,109) -- (322,168.8) -- (216,168.8) -- cycle ;
\draw    (105,85) .. controls (144.2,55.6) and (172.84,147.98) .. (211.61,123.45) ;
\draw [shift={(214,121.8)}, rotate = 143.13] [fill={rgb, 255:red, 0; green, 0; blue, 0 }  ][line width=0.08]  [draw opacity=0] (8.93,-4.29) -- (0,0) -- (8.93,4.29) -- cycle    ;
\draw    (103,128.77) .. controls (142.2,99.37) and (171.8,167.96) .. (210.61,142.48) ;
\draw [shift={(213,140.8)}, rotate = 143.13] [fill={rgb, 255:red, 0; green, 0; blue, 0 }  ][line width=0.08]  [draw opacity=0] (8.93,-4.29) -- (0,0) -- (8.93,4.29) -- cycle    ;
\draw    (135,195.77) .. controls (165.69,144.09) and (165.02,212.37) .. (213.52,162.35) ;
\draw [shift={(215,160.8)}, rotate = 133.35] [fill={rgb, 255:red, 0; green, 0; blue, 0 }  ][line width=0.08]  [draw opacity=0] (8.93,-4.29) -- (0,0) -- (8.93,4.29) -- cycle    ;
\draw    (463,134.77) .. controls (502.2,105.37) and (487.62,170.09) .. (524.67,144.45) ;
\draw [shift={(527,142.77)}, rotate = 143.13] [fill={rgb, 255:red, 0; green, 0; blue, 0 }  ][line width=0.08]  [draw opacity=0] (8.93,-4.29) -- (0,0) -- (8.93,4.29) -- cycle    ;
\draw  [dash pattern={on 4.5pt off 4.5pt}] (39,43.77) -- (131,43.77) -- (131,151.77) -- (39,151.77) -- cycle ;
\draw    (590,137.77) .. controls (597.92,236.77) and (626.42,125.05) .. (639.61,203.34) ;
\draw [shift={(640,205.77)}, rotate = 261.1] [fill={rgb, 255:red, 0; green, 0; blue, 0 }  ][line width=0.08]  [draw opacity=0] (8.93,-4.29) -- (0,0) -- (8.93,4.29) -- cycle    ;
\draw    (740,137.77) .. controls (721.19,202.12) and (632.79,130.23) .. (639.76,203.5) ;
\draw [shift={(640,205.77)}, rotate = 263.33] [fill={rgb, 255:red, 0; green, 0; blue, 0 }  ][line width=0.08]  [draw opacity=0] (8.93,-4.29) -- (0,0) -- (8.93,4.29) -- cycle    ;
\draw   (391,119) -- (461,119) -- (461,159) -- (391,159) -- cycle(455,125) -- (397,125) -- (397,153) -- (455,153) -- cycle ;
\draw    (321,139.8) .. controls (360.2,110.4) and (349.46,169.32) .. (386.66,143.46) ;
\draw [shift={(389,141.77)}, rotate = 143.13] [fill={rgb, 255:red, 0; green, 0; blue, 0 }  ][line width=0.08]  [draw opacity=0] (8.93,-4.29) -- (0,0) -- (8.93,4.29) -- cycle    ;
\draw  [fill={rgb, 255:red, 0; green, 0; blue, 0 }  ,fill opacity=1 ] (401,132) .. controls (401,129.79) and (402.79,128) .. (405,128) .. controls (407.21,128) and (409,129.79) .. (409,132) .. controls (409,134.21) and (407.21,136) .. (405,136) .. controls (402.79,136) and (401,134.21) .. (401,132) -- cycle ;
\draw  [fill={rgb, 255:red, 0; green, 0; blue, 0 }  ,fill opacity=1 ] (427,132) .. controls (427,129.79) and (428.79,128) .. (431,128) .. controls (433.21,128) and (435,129.79) .. (435,132) .. controls (435,134.21) and (433.21,136) .. (431,136) .. controls (428.79,136) and (427,134.21) .. (427,132) -- cycle ;
\draw  [fill={rgb, 255:red, 0; green, 0; blue, 0 }  ,fill opacity=1 ] (412,141) .. controls (412,138.79) and (413.79,137) .. (416,137) .. controls (418.21,137) and (420,138.79) .. (420,141) .. controls (420,143.21) and (418.21,145) .. (416,145) .. controls (413.79,145) and (412,143.21) .. (412,141) -- cycle ;
\draw  [fill={rgb, 255:red, 0; green, 0; blue, 0 }  ,fill opacity=1 ] (442,143) .. controls (442,140.79) and (443.79,139) .. (446,139) .. controls (448.21,139) and (450,140.79) .. (450,143) .. controls (450,145.21) and (448.21,147) .. (446,147) .. controls (443.79,147) and (442,145.21) .. (442,143) -- cycle ;
\draw    (405,132) -- (416,141) ;
\draw    (446,143) -- (416,141) ;
\draw    (416,141) -- (431,132) ;
\draw    (446,143) -- (431,132) ;
\draw    (555,140.77) .. controls (557.97,194.23) and (686.39,119.29) .. (710.31,207.06) ;
\draw [shift={(711,209.77)}, rotate = 256.55] [fill={rgb, 255:red, 0; green, 0; blue, 0 }  ][line width=0.08]  [draw opacity=0] (8.93,-4.29) -- (0,0) -- (8.93,4.29) -- cycle    ;
\draw    (703,138.77) .. controls (739.08,184.59) and (699.1,175.28) .. (710.05,207.23) ;
\draw [shift={(711,209.77)}, rotate = 248.2] [fill={rgb, 255:red, 0; green, 0; blue, 0 }  ][line width=0.08]  [draw opacity=0] (8.93,-4.29) -- (0,0) -- (8.93,4.29) -- cycle    ;
\draw    (713,230.77) .. controls (739.6,267.21) and (680.81,240.59) .. (714.4,279.92) ;
\draw [shift={(716,281.77)}, rotate = 228.62] [fill={rgb, 255:red, 0; green, 0; blue, 0 }  ][line width=0.08]  [draw opacity=0] (8.93,-4.29) -- (0,0) -- (8.93,4.29) -- cycle    ;
\draw    (714,303.77) .. controls (740.6,340.21) and (681.81,313.59) .. (715.4,352.92) ;
\draw [shift={(717,354.77)}, rotate = 228.62] [fill={rgb, 255:red, 0; green, 0; blue, 0 }  ][line width=0.08]  [draw opacity=0] (8.93,-4.29) -- (0,0) -- (8.93,4.29) -- cycle    ;
\draw  [dash pattern={on 4.5pt off 4.5pt}] (616,189.77) -- (788,189.77) -- (788,393.77) -- (616,393.77) -- cycle ;
\draw    (702,394.77) .. controls (728.6,431.21) and (669.81,404.59) .. (703.4,443.92) ;
\draw [shift={(705,445.77)}, rotate = 228.62] [fill={rgb, 255:red, 0; green, 0; blue, 0 }  ][line width=0.08]  [draw opacity=0] (8.93,-4.29) -- (0,0) -- (8.93,4.29) -- cycle    ;
\draw  [dash pattern={on 4.5pt off 4.5pt}] (493,89.77) -- (811,89.77) -- (811,235.77) -- (493,235.77) -- cycle ;
\draw   (686,44) .. controls (686,32.95) and (701.67,24) .. (721,24) .. controls (740.33,24) and (756,32.95) .. (756,44) .. controls (756,55.05) and (740.33,64) .. (721,64) .. controls (701.67,64) and (686,55.05) .. (686,44) -- cycle ;
\draw    (756,44) .. controls (773.55,45.72) and (774.94,54.32) .. (769.44,89.04) ;
\draw [shift={(769,91.77)}, rotate = 279.21] [fill={rgb, 255:red, 0; green, 0; blue, 0 }  ][line width=0.08]  [draw opacity=0] (8.93,-4.29) -- (0,0) -- (8.93,4.29) -- cycle    ;
\draw    (682.48,43.9) .. controls (646.38,43.9) and (668.6,78.1) .. (688,88.77) ;
\draw [shift={(686,44)}, rotate = 183.12] [fill={rgb, 255:red, 0; green, 0; blue, 0 }  ][line width=0.08]  [draw opacity=0] (8.93,-4.29) -- (0,0) -- (8.93,4.29) -- cycle    ;
\draw   (813,296) .. controls (813,284.95) and (831.8,276) .. (855,276) .. controls (878.2,276) and (897,284.95) .. (897,296) .. controls (897,307.05) and (878.2,316) .. (855,316) .. controls (831.8,316) and (813,307.05) .. (813,296) -- cycle ;
\draw    (789,249) .. controls (820.36,242.89) and (857.48,232.2) .. (855.19,273.41) ;
\draw [shift={(855,276)}, rotate = 275.17] [fill={rgb, 255:red, 0; green, 0; blue, 0 }  ][line width=0.08]  [draw opacity=0] (8.93,-4.29) -- (0,0) -- (8.93,4.29) -- cycle    ;
\draw    (855,316) .. controls (858.88,346.81) and (823.24,342.09) .. (791.89,343.61) ;
\draw [shift={(789,343.77)}, rotate = 356.42] [fill={rgb, 255:red, 0; green, 0; blue, 0 }  ][line width=0.08]  [draw opacity=0] (8.93,-4.29) -- (0,0) -- (8.93,4.29) -- cycle    ;

\draw (41,58) node [anchor=north west][inner sep=0.75pt]   [align=left] {LTL property};
\draw (78,75) node [anchor=north west][inner sep=0.75pt]   [align=left] {$\displaystyle \varphi $};
\draw (62,113) node [anchor=north west][inner sep=0.75pt]   [align=left] {iCGS};
\draw (72,130) node [anchor=north west][inner sep=0.75pt]   [align=left] {$\displaystyle M$};
\draw (88,198) node [anchor=north west][inner sep=0.75pt]   [align=center] {Execution \\trace};
\draw (116,235) node [anchor=north west][inner sep=0.75pt]   [align=left] {$\displaystyle h$};
\draw (269,138.9) node  [font=\large] [align=left] {Our tool};
\draw (40,22) node [anchor=north west][inner sep=0.75pt]   [align=left] {.json};
\draw (515,116) node [anchor=north west][inner sep=0.75pt]   [align=left] {$\displaystyle \{\langle s_{i} ,\psi _{i} ,atom_{\psi _{i}} \rangle ,\dotsc ,\ \langle s_{j} ,\psi _{j} ,atom_{\psi _{j}} \rangle \}$};
\draw (632,207) node [anchor=north west][inner sep=0.75pt]   [align=left] {$\displaystyle M'$};
\draw (427.09,181.5) node   [align=center] {internal \\representation};
\draw (682,208) node [anchor=north west][inner sep=0.75pt]   [align=left] {$\displaystyle \langle \psi _{n} ,\psi _{p} \rangle $};
\draw (684,282) node [anchor=north west][inner sep=0.75pt]   [align=left] {$\displaystyle \langle \varphi _{n} ,\varphi _{p} \rangle $};
\draw (723,246) node [anchor=north west][inner sep=0.75pt]   [align=left] {to LTL};
\draw (664,358) node [anchor=north west][inner sep=0.75pt]   [align=left] {$\displaystyle \langle Mon_{\varphi _{n}} ,Mon_{\varphi _{p}} \rangle $};
\draw (331,106) node [anchor=north west][inner sep=0.75pt]   [align=left] {parsing};
\draw (766,398) node [anchor=north west][inner sep=0.75pt]   [align=left] {RV};
\draw (642,447) node [anchor=north west][inner sep=0.75pt]   [align=left] {$\displaystyle \langle k,\varphi _{mc} ,\varphi _{rv} ,\varphi _{unchk} \rangle $};
\draw (789,69) node [anchor=north west][inner sep=0.75pt]   [align=left] {MC};
\draw (721,44) node   [align=left] {MCMAS};
\draw (855,296) node   [align=left] {LamaConv};

\end{tikzpicture}

}
\caption{Overview of the implemented tool.}
\label{fig:tool}
\end{figure}
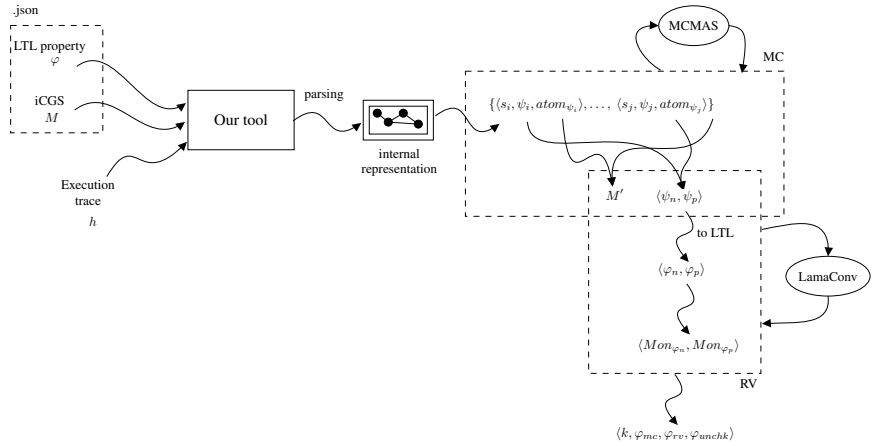

The entire manipulation, from parsing the model formatted in Json, to translating the latter to its equivalent ISPL program, has been performed by extending an existent Java library~\cite{StrategicTool}; the rest of the tool derives directly from the algorithms presented in this paper. 
The monitors generated by Algorithm~\ref{alg:rv} at lines 18 and 19 are obtained using LamaConv~\cite{LamaConv}, which is a Java library capable of translating expressions in temporal logic into equivalent automata and generating monitors out of these automata. For generating monitors, LamaConv uses the algorithm presented in~\cite{DBLP:journals/tosem/BauerLS11}.  

\vspace{-0.5em}
\subsection{Experiments} 

We tested our tool on a large set of automatically and randomly generated iCGSs; on a machine with the following specifications: Intel(R) Core(TM) i7-7700HQ CPU @ 2.80GHz, 4 cores 8 threads, 16 GB RAM DDR4. 
The objective of these experiments was to show how many times our algorithm returned a conclusive verdict. For each model, we ran our procedure and counted the number of times a solution was returned. Note that, our approach concludes in any case, but since the general problem is undecidable, the result might be inconclusive (\textit{i.e.}, $?$). In Figure~\ref{fig:experiments-success}, we report our results by varying the percentage of imperfect information (x axis) inside the iCGSs, from $0\%$ (perfect information, \textit{i.e.}, all states are distinguishable for all agents), to $100\%$ (no information, \textit{i.e.}, no state is distinguishable for any agent). For each percentage selected, we generated $10000$ random iCGSs and counted the number of times our algorithm returned with a conclusive result (\textit{i.e.}, $\top$ or $\bot$). As it can be seen in Figure~\ref{fig:experiments-success}, our tool concludes with a conclusive result more than 80\% of times. We do not observe any relevant difference amongst the different percentage of information used in the experiments. This is mainly due to the completely random nature of the iCGSs used. In more detail, the results we obtained completely depend on the topology of the iCGSs, so it is very hard to precisely quantify the success rate. However, the results obtained by our experiments using our procedure are encouraging. Unfortunately, no benchmark of existing iCGSs -- to test our tool on -- exists, thus these results may vary on more realistic scenarios. Nonetheless, considering the large set of iCGSs we experimented on, we do not expect substantial differences.

\begin{figure}
    \centering
    \includegraphics[width=\linewidth]{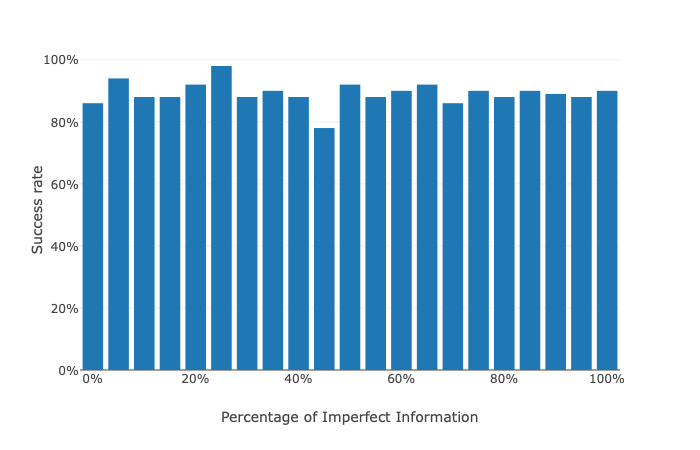}
    \caption{Success rate of our tool when applied to a set of randomly generated iCGSs.}
    \label{fig:experiments-success}
\end{figure}

Other than testing our tool w.r.t. the success rate over a random set of iCGSs, we evaluated the execution time as well. Specifically, we were much interested in analysing how such execution time is divided between $CheckSub$-$formulas()$ and Algorithm~\ref{alg:rv}. \textit{I.e.}, how much time is spent on verifying the models statically (through model checking), and how much is spent on verifying the temporal properties (through runtime verification). Figure~\ref{fig:experiments-time} reports the results we obtained on the same set of randomly generated used in Figure~\ref{fig:experiments-success}. The results we obtained are intriguing, indeed we can note a variation in the percentage of time spent on the two phases (y-axis) moving from low percentages to high percentages of imperfect information in the iCGSs (x-axis). When the iCGS is close to have perfect information (low percentages on x-axis), we may observe that most of the execution time is spent on performing static verification ($\sim$70\%), which corresponds to $CheckSub$-$formulas()$. On the other hand, when imperfect information grows inside the iCGS (high percentage on x-axis), we may observe that most of the execution time is spent on performing runtime verification ($\sim$90\% in occurrence of absence of information). The reason for this change in the execution behaviour is determined by the number of candidates extracted by the $FindSub$-$models()$ function. When the iCGS has perfect information, such function only extracts a single candidate (\textit{i.e.}, the entire model), since $FindSub$-$models()$ generates only one tuple. Such single candidate can be of non-negligible size, and the resulting static verification, time consuming; while the subsequent runtime verification is only performed once on the remaining temporal parts of the property to verify. On the other hand, when the iCGS has imperfect information, $FindSub$-$models()$ returns a set of candidates that can grow exponentially w.r.t. the number of states of the iCGS. Nonetheless, such candidates are small in size, since $FindSub$-$models()$ splits the iCGS into multiple smaller iCGSs with perfect information. Because of this, the static verification step is applied on small iCGSs and require less execution time; while the runtime verification step is called for each candidate (so an exponential number of times) and is only influenced by the size of the temporal property to verify.

\begin{figure}
    \centering
    \includegraphics[width=\linewidth]{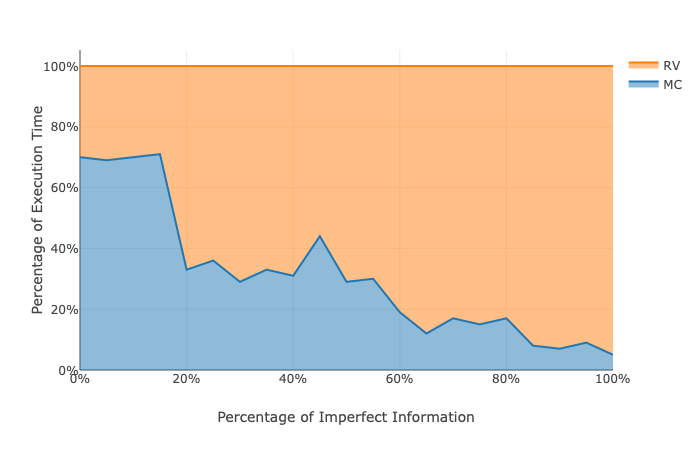}
    \caption{How the execution time of our tool when applied to a set of randomly generated iCGSs is divided.}
    \label{fig:experiments-time}
\end{figure}


In conclusion, it is important to emphasise that, even though the monitor synthesis is computationally hard (\textit{i.e.}, $2EXPTIME$), the resulting runtime verification process is polynomial in the size of the history analysed.
Naturally, the actual running complexity of a monitor depends on the formalism used to describe the formal property. In this work, monitors are synthesised from LTL properties. Since LTL properties are translated into Moore machines~\cite{DBLP:journals/tosem/BauerLS11}; because of this, the time complexity w.r.t. the length of the analysed trace is linear. This can be understood intuitively by noticing that the Moore machine so generated has finite size, and it does not change at runtime. Thus, the number of execution steps for each event in the trace is constant.


\section{Conclusions and Future work} \label{sec:conclusions}

The work presented in this paper follows a standard combined approach of formal verification techniques, where the objective is to get the best of both. 
We considered the model checking problem of MAS using strategic properties that is undecidable in general, and showed how runtime verification can help by verifying part of the properties at execution time. The resulting procedure has been presented both on a theoretical (theorems and algorithms) and a practical level (prototype implementation). 
It is important to note that this is the first attempt of combining model checking and runtime verification to verify strategic properties on a MAS. Thus, even though our solution might not be optimal, it is a milestone for the corresponding lines of research. Additional works will be done to improve the technique and, above all, its implementation. For instance, we are planning to extend this work considering a more predictive flavour. This can be done by recognising the fact that by verifying at static time part of the system, we can use this information at runtime to predict future events and conclude the runtime verification in advance.


%



\bibliographystyle{unsrt}
\bibliography{new_bib}

\begin{thebibliography}{10}

\bibitem{AHK02}
R.~Alur, T.A. Henzinger, and O.~Kupferman.
\newblock {Alternating-time temporal logic}.
\newblock {\em J. ACM}, 49(5):672--713, 2002.

\bibitem{DimaT11}
C.~Dima and F.L. Tiplea.
\newblock Model-checking {ATL} under imperfect information and perfect recall
  semantics is undecidable.
\newblock {\em CoRR}, abs/1102.4225, 2011.

\bibitem{Sch04}
P.Y. Schobbens.
\newblock {Alternating-Time Logic with Imperfect Recall.}
\newblock {\em ENTCS}, 85(2):82--93, 2004.

\bibitem{DBLP:conf/atal/BerthonMM17}
Rapha{\"{e}}l Berthon, Bastien Maubert, and Aniello Murano.
\newblock Decidability results for atl* with imperfect information and perfect
  recall.
\newblock In Kate Larson, Michael Winikoff, Sanmay Das, and Edmund~H. Durfee,
  editors, {\em Proceedings of the 16th Conference on Autonomous Agents and
  MultiAgent Systems, {AAMAS} 2017, S{\~{a}}o Paulo, Brazil, May 8-12, 2017},
  pages 1250--1258. {ACM}, 2017.

\bibitem{DBLP:journals/tocl/BerthonMMRV21}
Rapha{\"{e}}l Berthon, Bastien Maubert, Aniello Murano, Sasha Rubin, and
  Moshe~Y. Vardi.
\newblock Strategy logic with imperfect information.
\newblock {\em {ACM} Trans. Comput. Log.}, 22(1):5:1--5:51, 2021.

\bibitem{BelardinelliLMR17}
F.~Belardinelli, A.~Lomuscio, A.~Murano, and S.~Rubin.
\newblock Verification of multi-agent systems with imperfect information and
  public actions.
\newblock In {\em {AAMAS} 2017}, pages 1268--1276, 2017.

\bibitem{DBLP:journals/ai/BelardinelliLMR20}
Francesco Belardinelli, Alessio Lomuscio, Aniello Murano, and Sasha Rubin.
\newblock Verification of multi-agent systems with public actions against
  strategy logic.
\newblock {\em Artif. Intell.}, 285:103302, 2020.

\bibitem{BelardinelliLM19}
F.~Belardinelli, A.~Lomuscio, and V.~Malvone.
\newblock An abstraction-based method for verifying strategic properties in
  multi-agent systems with imperfect information.
\newblock In {\em Proceedings of AAAI}, 2019.

\bibitem{BelardinelliM20}
Francesco Belardinelli and Vadim Malvone.
\newblock A three-valued approach to strategic abilities under imperfect
  information.
\newblock In {\em Proceedings of the 17th International Conference on Knowledge
  Representation and Reasoning}, pages 89--98, 2020.

\bibitem{BelardinelliLM18}
F.~Belardinelli, A.~Lomuscio, and V.~Malvone.
\newblock Approximating perfect recall when model checking strategic abilities.
\newblock In {\em KR2018}, pages 435--444, 2018.

\bibitem{DBLP:journals/ai/JamrogaMM19}
Wojciech Jamroga, Vadim Malvone, and Aniello Murano.
\newblock Natural strategic ability.
\newblock {\em Artif. Intell.}, 277, 2019.

\bibitem{DBLP:conf/atal/JamrogaMM19}
Wojciech Jamroga, Vadim Malvone, and Aniello Murano.
\newblock Natural strategic ability under imperfect information.
\newblock In Edith Elkind, Manuela Veloso, Noa Agmon, and Matthew~E. Taylor,
  editors, {\em Proceedings of the 18th International Conference on Autonomous
  Agents and MultiAgent Systems, {AAMAS} '19, Montreal, QC, Canada, May 13-17,
  2019}, pages 962--970. International Foundation for Autonomous Agents and
  Multiagent Systems, 2019.

\bibitem{DBLP:conf/birthday/HinrichsSZ14}
Timothy~L. Hinrichs, A.~Prasad Sistla, and Lenore~D. Zuck.
\newblock Model check what you can, runtime verify the rest.
\newblock In Andrei Voronkov and Margarita~V. Korovina, editors, {\em
  {HOWARD-60:} {A} Festschrift on the Occasion of Howard Barringer's 60th
  Birthday}, volume~42 of {\em EPiC Series in Computing}, pages 234--244.
  EasyChair, 2014.

\bibitem{DBLP:conf/rv/DesaiDS17}
Ankush Desai, Tommaso Dreossi, and Sanjit~A. Seshia.
\newblock Combining model checking and runtime verification for safe robotics.
\newblock In Shuvendu~K. Lahiri and Giles Reger, editors, {\em Runtime
  Verification - 17th International Conference, {RV} 2017, Seattle, WA, USA,
  September 13-16, 2017, Proceedings}, volume 10548 of {\em Lecture Notes in
  Computer Science}, pages 172--189. Springer, 2017.

\bibitem{DBLP:conf/rv/KejstovaRB17}
Katar{\'{\i}}na Kejstov{\'{a}}, Petr Rockai, and Jiri Barnat.
\newblock From model checking to runtime verification and back.
\newblock In Shuvendu~K. Lahiri and Giles Reger, editors, {\em Runtime
  Verification - 17th International Conference, {RV} 2017, Seattle, WA, USA,
  September 13-16, 2017, Proceedings}, volume 10548 of {\em Lecture Notes in
  Computer Science}, pages 225--240. Springer, 2017.

\bibitem{barnat2013divine}
Ji{\v{r}}{\'\i} Barnat, Lubo{\v{s}} Brim, Vojt{\v{e}}ch Havel, Jan
  Havl{\'\i}{\v{c}}ek, Jan Kriho, Milan Len{\v{c}}o, Petr Ro{\v{c}}kai,
  Vladim{\'\i}r {\v{S}}till, and Ji{\v{r}}{\'\i} Weiser.
\newblock {DiVinE 3.0--an explicit-state model checker for multithreaded C \&
  C++ programs}.
\newblock In {\em International Conference on Computer Aided Verification},
  pages 863--868. Springer, 2013.

\bibitem{DBLP:conf/qsic/ZimmermanK09}
Daniel~M. Zimmerman and Joseph~R. Kiniry.
\newblock A verification-centric software development process for java.
\newblock In Byoungju Choi, editor, {\em Proceedings of the Ninth International
  Conference on Quality Software, {QSIC} 2009, Jeju, Korea, August 24-25,
  2009}, pages 76--85. {IEEE} Computer Society, 2009.

\bibitem{10.5555/3463952.3464230}
Angelo Ferrando and Vadim Malvone.
\newblock Strategy rv: A tool to approximate atl model checking under imperfect
  information and perfect recall.
\newblock In {\em Proceedings of the 20th International Conference on
  Autonomous Agents and MultiAgent Systems}, AAMAS '21, page 1764–1766,
  Richland, SC, 2021. International Foundation for Autonomous Agents and
  Multiagent Systems.

\bibitem{Jamroga-Hoek}
W.~Jamroga and W.~van~der Hoek.
\newblock Agents that know how to play.
\newblock {\em Fund. Inf.}, 62:1--35, 2004.

\bibitem{FHMV95}
R.~Fagin, J.Y. Halpern, Y.~Moses, and M.Y. Vardi.
\newblock {\em {Reasoning about Knowledge.}}
\newblock MIT, 1995.

\bibitem{BaierKatoen08}
C.~Baier and J.~P. Katoen.
\newblock {\em Principles of Model Checking (Representation and Mind Series)}.
\newblock 2008.

\bibitem{DBLP:conf/focs/Pnueli77}
Amir Pnueli.
\newblock The temporal logic of programs.
\newblock In {\em 18th Annual Symposium on Foundations of Computer Science,
  Providence, Rhode Island, USA, 31 October - 1 November 1977}, pages 46--57.
  {IEEE} Computer Society, 1977.

\bibitem{DBLP:journals/corr/abs-2112-13621}
Angelo Ferrando and Vadim Malvone.
\newblock Towards the verification of strategic properties in multi-agent
  systems with imperfect information.
\newblock {\em CoRR}, abs/2112.13621, 2021.

\bibitem{DBLP:journals/tosem/BauerLS11}
Andreas Bauer, Martin Leucker, and Christian Schallhart.
\newblock Runtime verification for {LTL} and {TLTL}.
\newblock {\em {ACM} Trans. Softw. Eng. Methodol.}, 20(4):14:1--14:64, 2011.

\bibitem{DBLP:journals/ai/KouvarosL16}
Panagiotis Kouvaros and Alessio Lomuscio.
\newblock Parameterised verification for multi-agent systems.
\newblock {\em Artif. Intell.}, 234:152--189, 2016.

\bibitem{LomuscioRaimondi06c}
A.~Lomuscio and F.~Raimondi.
\newblock Model checking knowledge, strategies, and games in multi-agent
  systems.
\newblock In {\em Proceedings of the 5th International Joint Conference on
  Autonomous agents and Multi-Agent Systems ({AAMAS}06)}, pages 161--168. ACM
  Press, 2006.

\bibitem{StrategicTool}
Francesco Belardinelli, Vadim Malvone, and Abbas Slimani.
\newblock A tool for verifying strategic properties in mas with imperfect
  information, 2020.

\bibitem{LamaConv}
Torben Scheffel and Malte~Schmitz et~al.
\newblock {LamaConv}- logics and automata converter library, 2016.

\end{thebibliography}


\end{document}